\documentclass[aps,pra,a4paper,twocolumn,oneside,amsmath,amssymb,footinbib,superscriptaddress]{revtex4-1}


\usepackage{bm}

\usepackage{xcolor}
\usepackage{geometry}

\usepackage{amsmath, amsfonts,amssymb,theorem,
euscript,array,enumerate,amsfonts,mathrsfs,color}

\usepackage[breaklinks=true,colorlinks,citecolor=blue,linkcolor=blue,urlcolor=blue]{hyperref}
\usepackage{dcolumn}
\usepackage{graphics}
\usepackage{epstopdf}
\usepackage[dvips]{epsfig}
\input xy
\xyoption{all}



\linespread{1.1}
\parskip .05in


\definecolor{Ying}{rgb}{0.8,0,0.5}




\newcommand{\bde}{\begin{displaymath}}
\newcommand{\ede}{\end{displaymath}}
\newcommand{\el}{\end{lem}}
\newcommand{\be}{\begin{equation}}
\newcommand{\ee}{\end{equation}}
\newcommand{\beq}{\begin{eqnarray*}}
\newcommand{\eeq}{\end{eqnarray*}}
\newcommand{\beqa}{\begin{eqnarray}}
\newcommand{\eeqa}{\end{eqnarray}}
\newcommand{\bel }{\left\{\begin{array}{ll}}
\newcommand{\eel}{\cr \end{array} \right.}
\newcommand{\bex}{\begin{ex} \rm }
\newcommand{\eex}{\end{ex}}

\newcommand{\bp}{\begin{pro}}
\newcommand{\ep}{\end{pro}}

\setlength{\parskip}{1.5mm} 

\geometry{left=2.5cm, right=2.5cm, top=3.4cm, bottom=2.9cm, head=1cm, headsep=1cm, foot=1cm}

\def\ee{\epsilon}

\def\lm{\lambda}

\def\beqlb{\begin{eqnarray}}\def\eeqlb{\end{eqnarray}}
\def\beqnn{\begin{eqnarray*}}\def\eeqnn{\end{eqnarray*}}

\theoremstyle{plain}
\theorembodyfont{\upshape\it}
\newtheorem{Thm}{\bf Theorem}[section]
\newtheorem{Pro}[Thm]{\bf Proposition}

\theorembodyfont{\upshape}



\newcommand {\finproof} {\hfill $\Box$ \vskip 5 pt }



\def\edoc{\end{document} }



\begin{document}

\author{Bin Xie}
\email{bin.xie@uga.edu}
\affiliation{University of Georgia, Mathematics Department. Athens, GA 30605, USA}
\affiliation{Beijing Wuzi University, Information College, Tongzhou, Beijing 101149, P.R.China}

\author{Weiping Li}
\email{w.li@okstate.edu}
\affiliation{Southwest Jiaotong University,  Chengdu, Sichuan Province 611756, P.R.China }
\affiliation{Oklahoma State University, Mathematics Department, Stillwater,OK 74078, USA}

\title{The Parameter Sensitivities of a Jump-diffusion Process in Basic Credit Risk Analysis}

\begin{abstract}
We detect the parameter sensitivities of bond pricing which is driven by a Brownian motion and a compound Poisson process as the discontinuous case in credit risk research. The strict mathematical deductions are given theoretically due to the explicit call price formula. Furthermore, we illustrate Matlab simulation to verify these conclusions.

{\bf Key words:} jump-diffusion process, compound Poisson process, credit risk, parameter sensitivity. 
\end{abstract}

\maketitle

\section{Introduction}
\label{intro}
The continuous model of option pricing was rising up in Black and Scholes \cite{BS} on 1973 and the discontinuous one with Compound Poisson Process was studied by Merton \cite{Merton} on 1976. Zhou \cite{Zhou} disclosed the credit risk approach base on this jump-diffusion process on 1997. 

With the explicit Black-Scholes option formulas, the strict derivation of parameters sensitivities were disclosed and named with Greeks. The Greeks are very useful in the stock market due to hedging practices. We also explore the parameter sensitivities for bond price through the complicated jump-diffusion formula according to the log-normal distributed compound Poisson processes strictly in this paper. Beyond the proofs, the analytic illustrations are provided by Matlab simulation.

The rest of the paper is organized as follows. Sections 2 is model framework about the bond price formula of jump-diffusion model with compound Poisson processes. Section 3 is the proportions of parameter sensitivities with strict proofs. Section 4 concludes.

\section{Model Framework}
The financial market has a big part which is called credit market or bond market. The participants can issue new debts or securities on the credit market. Credit risk is the crucial problem for the loss risk of borrower's failure to meet the obligations.

\subsection{Basic Credit Risk Concepts}
Assume that we are in the setting of the standard Black-Scholes model, i.e. we analyze a market with continuous trading which is frictionless and competitive with assumptions \cite{Lando}.\\
1. agents are price takers.\\
2. there are no transaction costs.\\
3. there is unlimited access to short selling and no indivisibilities of assets.\\
4. borrowing and lending through a money-market account can be done at some riskless, continuously compounded rate $r$.

We want to price bonds issued by a firm whose assets are assumed to follow a geometric Brownian motion:
$$dV_t=\mu V_t dt+\sigma V_tdW_t.$$
Here, $W$ is a standard Brownian motion under the probability measure $\textbf{P}$.\\
Let the starting value of assets is $V_0$. Then by Ito-Doeblin formula:
$$V_t=V_0exp((\mu-\frac{1}{2}\sigma^2)t+\sigma W_t).$$

We take it to be well known that in an economy consisting of these two assets, the  price $C_0$ at time $0$ of a contingent claim paying $C(V_T)=C_T$ at time $T$ is equal to 
$$C_0=E^Q[e^{-rt}C_T]  ,$$
where \textbf{Q} is the equivalent martingale measure under which the dynamics of $V$ are given as
$$V_t=V_0exp((r-\frac{1}{2}\sigma^2)t+\sigma W^Q_t).$$
Here, $W^Q_t$ is a Brownian motion and we can see that the drift term $\mu$ has been replaced by $r$.\cite{Lando}

Now, assume that the firm at time $0$ has issued two types of claims: debt and equity. In the simple model, debt is a zero-coupon bond with a face value of $D$ and maturity date $T$. We think of the firm run by the equity owners. At maturity of bond, equity holder pay the face value of debt precisely when the assets value is higher than the face value of the bond.
On the other hand, if assets are worth less than $D$, equity owners do not want to pay $D$. And since they have limited liability they don't have to do that. Bond holders then take over the remaining assets of $V_T$ instead of the promised payment $D$. 
With this assumption, the payoffs to debt, $B_T$, and equity, $S_T$, at date $T$ are given as:
\begin{gather*}
B_T=min(D,V_T)=D-max(D-V_T,0),\\
S_T=max(V_T-D,0).
\end{gather*}
From the structure, debt can be viewed as the difference between a riskless bond and a put option, and equity can be viewed as a call option on the firm's assets. \cite{Lando}\\

We assumed there are no transaction costs, bankruptcy costs, taxes and so on for simpleness. We then get $V_T=B_T+S_T$.
Given the current level $V$ and volatility $\sigma$ of assets, and the riskless rate $r$, we denote the Black-Scholes model of European call as $C(V_t,D, \sigma,r,T-t)$ with strike price $D$ and maturity time $T$, \cite{Lando} i.e.
$$C(V_t,D, \sigma,r,T-t)=V_tN(d_1)-De^{-r(T-t)}N(d_2).$$
Where $N$ is the standard normal distribution function and 
\begin{gather*}
    d_{1,2}=\dfrac{ln(V_t/D)+(r\pm\dfrac{1}{2}\sigma^2)(T-t)}{\sigma\sqrt{T-t}},\\
    d_1-d_2=\sigma\sqrt{T-t}.
\end{gather*}

Applying the Black-Scholes formula to price these options, we obtain the Merton model for values of debt and equity at time t as:
\begin{gather*}
S_t=C(V_t,D, \sigma,r,T-t), \\
B_t=De^{(-r(T-t))}-P(V_t,D,\sigma,r,T-t)
\end{gather*}

From the put-call parity for European options on non-dividend paying stocks
$$C(V_t,D, \sigma,r,T-t)-P(V_t,D,\sigma,r,T-t)=V_t-De^{-r(T-t)}.$$

We get
\begin{align*}
B_t &= De^{(-r(T-t))}-P(V_t,D,\sigma,r,T-t)\\
&=V_t-C(V_t,D, \sigma,r,T-t)\\
&=V_t(1-N(d_1))+De^{-r(T-t)}N(d_2).
\end{align*}

\subsection{Basic Credit Risk Analysis with Compound Poisson Jumps}
For the discontinuous Black-Scholes model, we may consider the case of compound Poisson jumps which has the explicit formula for the the call price. Therefore, the bond price is obvious from the call-put parity and equality $V_T=B_T+S_T$ at maturity time $T$.

Suppose asset value $V_t$ has dynamics of jumps, then the equity value $S_t$ is priced as a call option $C^J$ with jumps.
First, we focus the compound Poisson jumps with i.i.d. log-normal distributed $Y_i+1$ (i.e., $ln(Y_i+1)\sim N(\mu,\delta ^2)$) which has the explicit formula, the price of call option $C^J$ is as following \cite{Merton}:
\begin{multline} \label{call}
C^J(V_t,D,\tau,\sigma^2,r,\delta^2,\lm,k)\\
  =\sum ^{\infty}_{n=0} \frac{(\lm'\tau)^n}{n!}
e^{-\lm' \tau} C(V_t,D,\tau,\sigma_n,r_n)
\end{multline} 

where $\lm$ is the intensity of Poisson process, $C(V_t,D,\tau,\sigma_n,r_n)$ is the standard Black-Scholes formula for a call and 
\begin{gather*}
k =E(Y_i),\\
\lm'=\lm (1+k),\\
r_n=r+n\gamma/\tau-\lm k,\\
\sigma_n^2 = \sigma^2+n\delta^2/\tau,\\
\gamma = ln(1+k)= \mu+\frac{1}{2}\delta^2.
\end{gather*}

In advance, some facts of general Black-Scholes call price are listed \cite{JH}:
\begin{gather*}
    C_x =N(d_1)=\Delta>0,\\
C_{\tau}  = \frac{S_t\sigma}{2\sqrt{\tau}}n(d_1)+Kre^{-r\tau}N(d_2)=\Theta>0,\\
C_{\sigma}  = S_t\sqrt{\tau}n(d_1)=Vega>0,\\
C_r  = \tau  e^{-r\tau}N(d_2)=Rho>0,\\
C_K  = - e^{-r\tau}N(d_2)<0.
\end{gather*}

\section{Sensitivities of Bond Pricing for Log-normal Jumps Process}

Due to the explicit formula, the derivatives with respect to all parameters are examined as the sensitivities of bond price.

\begin{Pro}
(i) The bond price is increasing in $V_t$ for log-normal jumps process.

(ii) $ \dfrac{\partial B_t}{\partial x}\in (0,1).$
\end{Pro}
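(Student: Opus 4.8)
The plan is to start from the structural representation of the defaultable bond. From $V_T=B_T+S_T$ at maturity together with risk-neutral valuation (exactly as in the no-jump case, where one obtains $B_t=V_t-C(V_t,D,\sigma,r,T-t)$), the bond value under compound Poisson jumps is
$$B_t=V_t-C^J(V_t,D,\tau,\sigma^2,r,\delta^2,\lm,k),$$
the firm value minus the equity call $C^J$ of \eqref{call}. Hence, writing $x=V_t$,
$$\frac{\partial B_t}{\partial x}=1-\frac{\partial C^J}{\partial x},$$
and both (i) and (ii) reduce to showing $\partial C^J/\partial x\in(0,1)$.

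Next I would differentiate the Merton series \eqref{call} term by term in $x$. Set $p_n=e^{-\lm'\tau}(\lm'\tau)^n/n!$ for the Poisson weights, which are nonnegative and satisfy $\sum_{n\ge0}p_n=1$; each summand is $p_n\,C(V_t,D,\tau,\sigma_n,r_n)$. By the listed fact $C_x=N(d_1)=\Delta$ one has $\partial_x C(V_t,D,\tau,\sigma_n,r_n)=N(d_1^{(n)})$ with
$$d_1^{(n)}=\frac{\ln(V_t/D)+(r_n+\tfrac12\sigma_n^2)\tau}{\sigma_n\sqrt{\tau}},$$
which is well defined because $\sigma_n^2=\sigma^2+n\delta^2/\tau>0$ for every $n$. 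The interchange of $\partial_x$ and $\sum_n$ requires a justification: since $0<N(d_1^{(n)})<1$, the differentiated series is dominated termwise by $\sum_n p_n<\infty$, uniformly for $x$ in a neighbourhood of any given asset level, so by the Weierstrass $M$-test it converges uniformly and term-by-term differentiation is legitimate. This uniform-convergence step is the one genuinely technical point; everything else is algebraic or probabilistic bookkeeping.

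Finally, $\partial C^J/\partial x=\sum_{n\ge0}p_n N(d_1^{(n)})$ is a convex combination of numbers each lying strictly in $(0,1)$, hence it lies strictly in $(0,1)$ as well. Consequently $\partial B_t/\partial x=1-\sum_{n\ge0}p_n N(d_1^{(n)})\in(0,1)$, which is precisely (ii); in particular $\partial B_t/\partial x>0$, giving (i). I expect the only obstacle worth spelling out carefully is the legitimacy of differentiating under the infinite sum; once that is in place, positivity and the upper bound $1$ drop out of the ``mixture of CDF values'' structure of the jump-diffusion call price.
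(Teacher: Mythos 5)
Your proposal is correct and follows essentially the same route as the paper: write $B_t=V_t-C^J$, differentiate the Merton series term by term using $C_x=N(d_{1n})$, and observe that $\partial B_t/\partial x=\sum_n p_n(1-N(d_{1n}))$ is a convex combination of values in $(0,1)$. The only difference is that you explicitly justify the interchange of differentiation and summation via uniform convergence, a point the paper passes over silently.
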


\begin{proof}

Let $V_t=x$, $T-t= \tau.$
Then we check the partial derivative of $B_t$ with respect to $x$.
Hence,
\begin{align*}
&\dfrac{\partial B_t}{\partial x} =\dfrac{\partial}{\partial x}(V_t-C^J(V_t,D,\tau,\sigma^2,r,\delta^2,\lm,k))\\
    &=1-\dfrac{\partial}{\partial x}(\sum ^{\infty}_{n=0} \frac{(\lm'\tau)^n}{n!} e^{-\lm' \tau} C(V_t,D,\tau,\sigma_n,r_n))\\    
    &=1-\sum ^{\infty}_{n=0} \frac{(\lm'\tau)^n}{n!}
e^{-\lm' \tau} \dfrac{\partial C(V_t,D,\tau,\sigma_n,r_n)}{\partial x}) \\
	&=1-\sum ^{\infty}_{n=0} \frac{(\lm'\tau)^n}{n!} e^{-\lm' \tau} N(d_{1n})\\
	&=\sum ^{\infty}_{n=0} \frac{(\lm'\tau)^n}{n!} e^{-\lm' \tau} (1-N(d_{1n}))
	\in (0,1).
\end{align*}
Where $\sum ^{\infty}_{n=0} \frac{(\lm'\tau)^n}{n!} e^{-\lm' \tau}=1$ is convergent.
\end{proof}
\finproof

It is clear that the bond price goes up as $V_t$ increases. We can check this trend by Matlab numerically in using formula \ref{call} (See Figure \ref{fig:Bt-Vt}).

\begin{figure}[htbp]
\centering
\includegraphics[width = .5\textwidth]{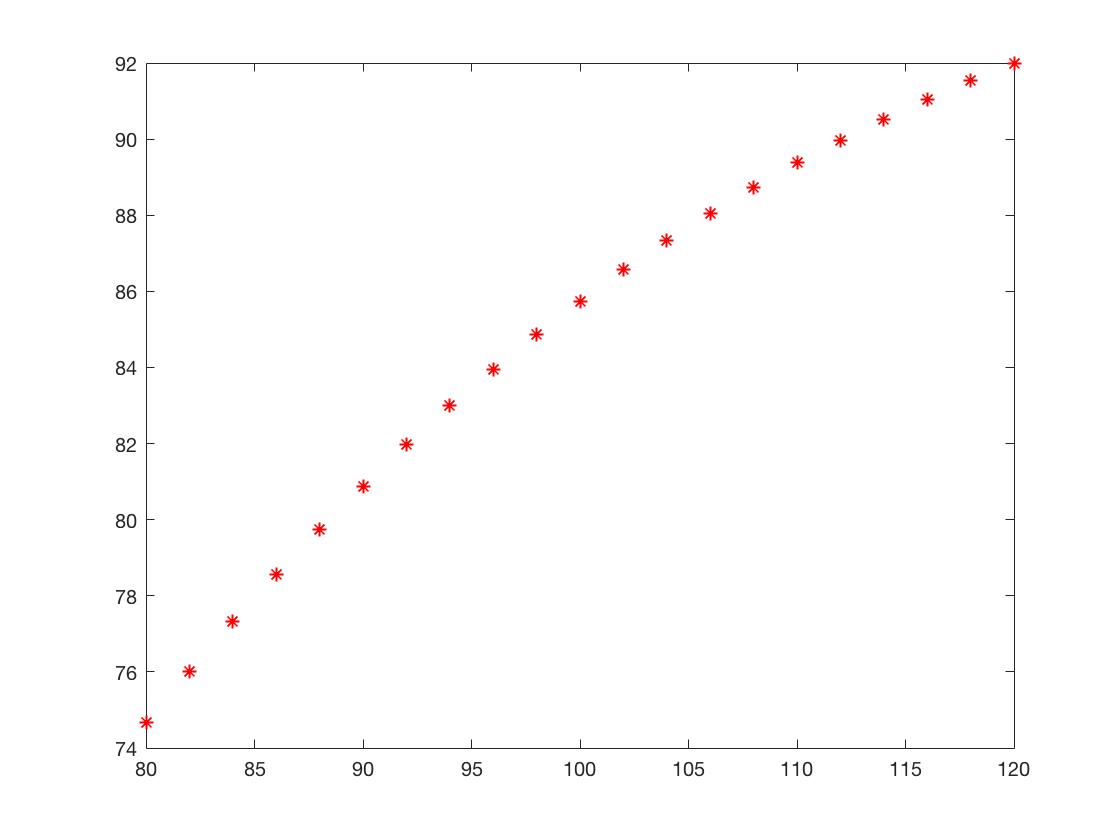}
\caption{Bond price - $V_t$. ($V_t$ is from $80$ to $120$ with step size $2$, $D=110,$ $\tau=2,$ $\sigma=0.2,$ $r=0.05,$ $\lambda=0.1,$ $\mu=-0.2,$ $\delta=0.6,$ upbound of summation $n=50.$) }
\label{fig:Bt-Vt}
\end{figure}

\begin{Pro} 
(i) The bond price is increasing in face value $D$ for log-normal jumps process.

(ii)$\dfrac{\partial B_t}{\partial D} \in (0,e^{-(r -\lm k )\tau})$.
\end{Pro}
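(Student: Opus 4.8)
The plan is to differentiate the identity $B_t = V_t - C^J(V_t,D,\tau,\sigma^2,r,\delta^2,\lm,k)$ with respect to $D$, exactly as was done for $V_t$ in the previous proposition. Since $V_t=x$ carries no $D$-dependence and the Poisson weights $w_n:=\frac{(\lm'\tau)^n}{n!}e^{-\lm'\tau}$ do not depend on $D$ either, the entire computation collapses to differentiating each Black--Scholes term $C(V_t,D,\tau,\sigma_n,r_n)$ in its strike argument. First I would invoke the listed fact $C_K=-e^{-r\tau}N(d_2)$, applied with the $n$-th effective rate $r_n$ and strike $D$, to get $\partial_D C(V_t,D,\tau,\sigma_n,r_n)=-e^{-r_n\tau}N(d_{2n})$, where $d_{2n}$ denotes the usual $d_2$ formed with $\sigma_n,r_n$. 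Summing term by term gives
\[
\frac{\partial B_t}{\partial D}=\sum_{n=0}^{\infty} w_n\, e^{-r_n\tau}\,N(d_{2n}),
\]
and since every summand is strictly positive this immediately yields (i): the bond price is increasing in the face value $D$.

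For the upper bound in (ii) I would use $N(d_{2n})<1$ together with the identity $r_n\tau=(r-\lm k)\tau+n\gamma$ that follows from $r_n=r+n\gamma/\tau-\lm k$, so that $e^{-r_n\tau}=e^{-(r-\lm k)\tau}e^{-n\gamma}$. This reduces the estimate to $\frac{\partial B_t}{\partial D}<e^{-(r-\lm k)\tau}\sum_n w_n e^{-n\gamma}$. The remaining series is again of Poisson type: writing $w_n e^{-n\gamma}=e^{-\lm'\tau}(\lm'\tau e^{-\gamma})^n/n!$ and using $\lm' e^{-\gamma}=\lm(1+k)/(1+k)=\lm$ (because $\gamma=\ln(1+k)$), it sums to $e^{-\lm k\tau}$; alternatively, when $k\ge 0$ one simply notes $e^{-n\gamma}\le 1$ and bounds the sum by $\sum_n w_n=1$. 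Either route gives $\frac{\partial B_t}{\partial D}<e^{-(r-\lm k)\tau}$ as claimed (the exact evaluation in fact delivers the sharper value $e^{-r\tau}$ when $\lm k\ge 0$).

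The main obstacle is the one genuinely non-mechanical step: justifying the term-by-term differentiation of the infinite series, i.e. that $\sum_n \partial_D\big[w_n C(V_t,D,\tau,\sigma_n,r_n)\big]$ converges uniformly in $D$ on compact sets and therefore equals $\partial_D C^J$. This follows from the uniform bound $|\partial_D C(V_t,D,\tau,\sigma_n,r_n)|=e^{-r_n\tau}N(d_{2n})\le e^{-r_n\tau}$ and the absolute convergence of $\sum_n w_n e^{-r_n\tau}$ (which we just showed equals $e^{-r\tau}$), but it should be recorded rather than assumed. A secondary, minor point is the sign of $\gamma=\ln(1+k)$ when one passes to the crude estimate $e^{-n\gamma}\le 1$; using the exact summation of the Poisson series avoids having to assume $\gamma\ge 0$.
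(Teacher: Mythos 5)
Your computation is the same as the paper's: differentiate term by term, apply $C_K=-e^{-r_n\tau}N(d_{2n})$ to each Black--Scholes summand, and factor $e^{-r_n\tau}=e^{-(r-\lm k)\tau}e^{-n\gamma}$. Where you genuinely add something is in part (ii). The paper stops at $\frac{\partial B_t}{\partial D}=e^{-(r-\lm k)\tau}\sum_n w_n e^{-n\gamma}N(d_{2n})$ and simply asserts the value lies in $(0,e^{-(r-\lm k)\tau})$, which implicitly uses $e^{-n\gamma}\le 1$, i.e.\ $\gamma=\ln(1+k)\ge 0$; this is false for the paper's own illustrative parameters ($\mu=-0.2$, $\delta=0.6$ give $\gamma=-0.02$, $k<0$), so the stated interval is not justified in general by the paper's argument. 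Your exact summation of the Poisson series via $\lm'e^{-\gamma}=\lm$, giving $\sum_n w_n e^{-n\gamma}=e^{-\lm k\tau}$ and hence the unconditional (and, for $k\ge 0$, sharper) bound $\frac{\partial B_t}{\partial D}<e^{-r\tau}$, repairs this; your remark that the term-by-term differentiation needs the uniform bound $\sum_n w_n e^{-r_n\tau}=e^{-r\tau}<\infty$ is likewise a legitimate step the paper omits throughout. Part (i), the monotonicity claim, is identical in both arguments and is correct.
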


\begin{proof}
  
In fact,
\begin{align*}
&\dfrac{\partial C_J(V_t,D,\tau,\sigma^2,r,\delta^2,\lm,k)}{\partial D}\\
    &=\dfrac{\partial}{\partial D}(\sum ^{\infty}_{n=0} \frac{(\lm'\tau)^n}{n!} e^{-\lm' \tau} C(V_t,D,\tau,\sigma_n,r_n))  \\
    &=\sum ^{\infty}_{n=0} \frac{(\lm'\tau)^n}{n!}
e^{-\lm' \tau} \dfrac{\partial C(V_t,D,\tau,\sigma_n,r_n)}{\partial D}\\ 
    &=-\sum ^{\infty}_{n=0} \frac{(\lm'\tau)^n}{n!}
e^{-\lm' \tau} e^{-r_n\tau}N(d_{2n})\\
	&=-\sum ^{\infty}_{n=0} \frac{(\lm'\tau)^n}{n!}
e^{-\lm' \tau} e^{-(r\tau +n\gamma-\lm k \tau)}N(d_{2n})\\
	&= -e^{-(r -\lm k )\tau} \sum ^{\infty}_{n=0} \frac{(\lm'\tau)^n}{n!}
	e^{-\lm' \tau}e^{-n \gamma}N(d_{2n})\\
	&\in (-e^{-(r -\lm k )\tau}, 0).
\end{align*}
Then,$\dfrac{\partial B_t}{\partial D}=-\dfrac{\partial C_J(V_t,D,\tau,\sigma^2,r,\delta^2,\lm,k)}{\partial D}\in (0,e^{-(r -\lm k )\tau})$.

\end{proof} 
\finproof

Definitely, increasing the face value typically will produce a larger payoff.
(See Figure \ref{fig:Bt-D}).

\begin{figure}[htbp]
\centering
\includegraphics[width = .5\textwidth]{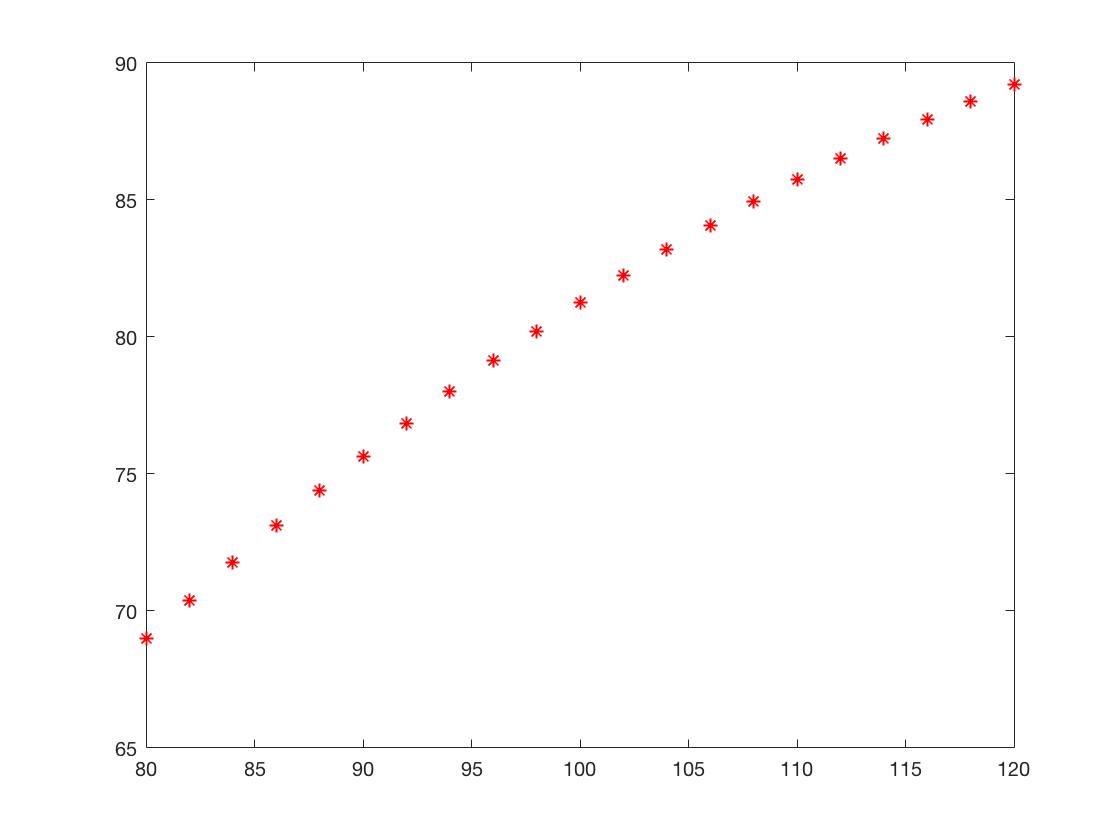}
\caption{Bond price - D. (D is from $80$ to $120$ with step size $2$, $V_t=100,$ $\tau=2,$ $\sigma=0.2,$ $r=0.05,$ $\lambda=0.1,$ $\mu=-0.2,$ $\delta=0.6,$ upbound of summation $n=50.$) }
\label{fig:Bt-D}
\end{figure}

\begin{Pro} 
The bond price is decreasing in volatility, $\sigma$, for log-normal jumps process.
\end{Pro}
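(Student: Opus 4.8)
The plan is to start from the representation $B_t = V_t - C^J(V_t,D,\tau,\sigma^2,r,\delta^2,\lm,k)$ already established before Proposition 3.1, observe that $V_t = x$ does not depend on $\sigma$, and conclude that $\partial B_t/\partial\sigma = -\,\partial C^J/\partial\sigma$. Thus it suffices to prove that $C^J$ is strictly \emph{increasing} in $\sigma$. The first substantive step is to notice that in the Merton series for $C^J$ the Poisson weights $\frac{(\lm'\tau)^n}{n!}e^{-\lm'\tau}$, the modified intensity $\lm' = \lm(1+k)$, and the modified rate $r_n = r + n\gamma/\tau - \lm k$ are all independent of $\sigma$; the parameter $\sigma$ enters the $n$-th Black--Scholes term \emph{only} through $\sigma_n = \sqrt{\sigma^2 + n\delta^2/\tau}$.

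Next I would differentiate the series term by term. By the chain rule and the listed Black--Scholes facts,
\begin{equation*}
\frac{\partial C(V_t,D,\tau,\sigma_n,r_n)}{\partial\sigma}
 = \mathrm{Vega}_n \cdot \frac{\partial\sigma_n}{\partial\sigma}
 = V_t\sqrt{\tau}\,n(d_{1n})\cdot\frac{\sigma}{\sigma_n},
\end{equation*}
which is strictly positive for $\sigma>0$ since $V_t\sqrt{\tau}\,n(d_{1n})>0$ and $\sigma/\sigma_n>0$. Summing against the (nonnegative, $\sigma$-independent) Poisson weights gives
\begin{equation*}
\frac{\partial C^J}{\partial\sigma}
 = \sum_{n=0}^{\infty}\frac{(\lm'\tau)^n}{n!}e^{-\lm'\tau}\,V_t\sqrt{\tau}\,n(d_{1n})\,\frac{\sigma}{\sigma_n} > 0,
\end{equation*}
hence $\partial B_t/\partial\sigma = -\,\partial C^J/\partial\sigma < 0$, as claimed.

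The one point requiring a word of justification is the term-by-term differentiation of the infinite series. Because $n(d_{1n})\le 1/\sqrt{2\pi}$ and $\sigma/\sigma_n\le 1$, each summand of the differentiated series is dominated by $\frac{(\lm'\tau)^n}{n!}e^{-\lm'\tau}\,V_t\sqrt{\tau}/\sqrt{2\pi}$, whose sum is finite and locally uniform in $\sigma$; the Weierstrass $M$-test then licenses the interchange of sum and derivative. This is the same convergence mechanism already invoked implicitly in Propositions 3.1 and 3.2, so I expect it to be a routine verification rather than the real obstacle — once the independence of the remaining coefficients from $\sigma$ is recorded, the result is essentially a one-line positivity computation via Vega.
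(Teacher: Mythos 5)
Your proposal is correct and follows essentially the same route as the paper: differentiate the Merton series term by term and use the positivity of Vega. In fact you are slightly more careful than the paper, which writes $\partial C/\partial\sigma_n$ without the chain-rule factor $\partial\sigma_n/\partial\sigma=\sigma/\sigma_n$ that you correctly include (harmless for the sign, since $\sigma/\sigma_n>0$), and you also supply the Weierstrass $M$-test justification for interchanging sum and derivative, which the paper leaves implicit.
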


\begin{proof}

Mathematically,
\begin{align*}
&\dfrac{\partial C_J(V_t,D,\tau,\sigma^2,r,\delta^2,\lm,k)}{\partial \sigma}\\
    &=\dfrac{\partial}{\partial \sigma}(\sum ^{\infty}_{n=0} \frac{(\lm'\tau)^n}{n!} e^{-\lm' \tau} C(V_t,D,\tau,\sigma_n,r_n))  \\
    &=\sum ^{\infty}_{n=0} \frac{(\lm'\tau)^n}{n!}
e^{-\lm' \tau} \dfrac{\partial C(V_t,D,\tau,\sigma_n,r_n)}{\partial \sigma_n} \\
    &=\sum ^{\infty}_{n=0} \frac{(\lm'\tau)^n}{n!}
e^{-\lm' \tau} x\sqrt{\tau} n(d_{1n})\\
	& =x\sqrt{\tau}\sum ^{\infty}_{n=0} \frac{(\lm'\tau)^n}{n!}
e^{-\lm' \tau}  n(d_{1n})>0.
\end{align*}
Then, $\dfrac{\partial B_t}{\partial \sigma}=-\dfrac{\partial C_J(V_t,D,\tau,\sigma^2,r,\delta^2,\lm,k)}{\partial \sigma}<0$. 

\end{proof} 
\finproof

When the volatility goes up, $B_t$ must decrease because the sum of $S_t$ and $B_t$ remains unchanged, call price increases as $V_t$ is more fluctuable.
(See Figure \ref{fig:Bt-sigma}).

\begin{figure}[htbp]
\centering
\includegraphics[width = .5\textwidth]{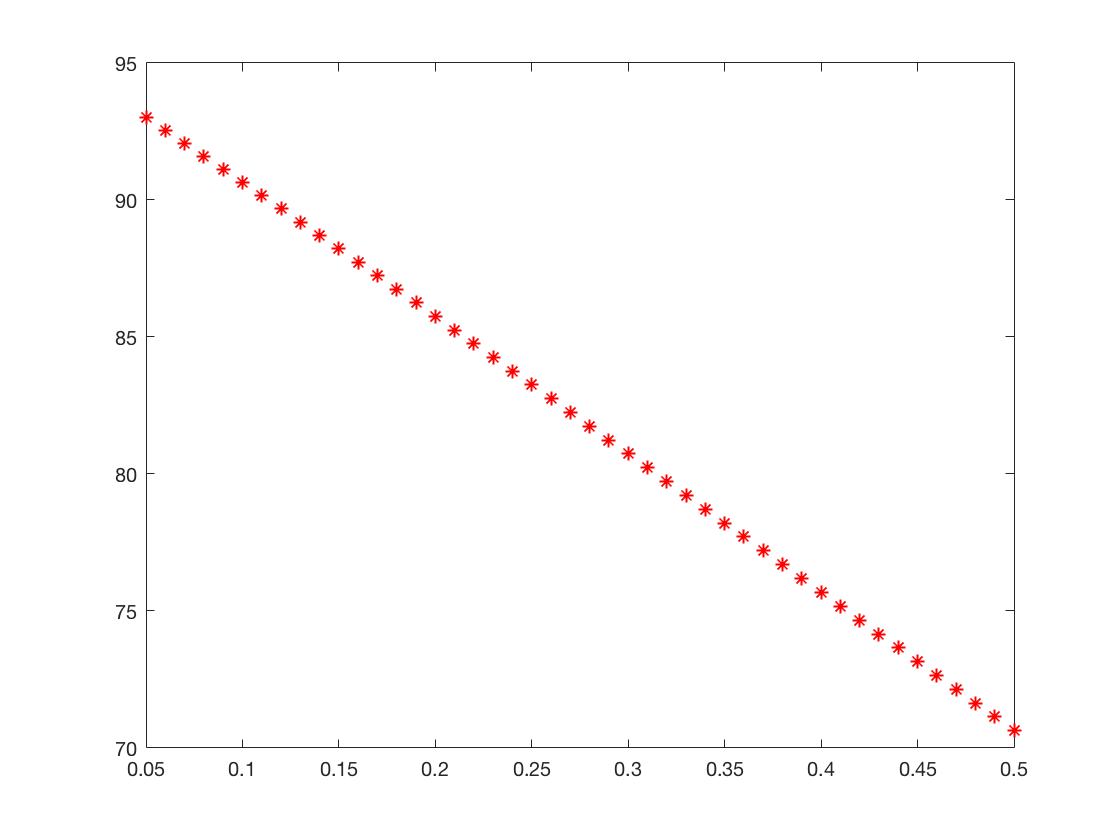}
\caption{Bond price - $\sigma$. ($\sigma$ is from $0.05$ to $0.5$ with step size $0.01$, $V_t=100,$ $D=110,$ $\tau=2,$ $r=0.05,$ $\lm= 0.1,$ $\mu=-0.2,$ $\delta=0.6,$ upbound of summation $n=50.$)}
\label{fig:Bt-sigma}
\end{figure}

\begin{Pro} 
(i) The bond price is decreasing in risk-free interest rate $r$ for log-normal jumps process.

(ii) $\dfrac{\partial B_t}{\partial r} \in (-\tau De^{-(r -\lm k )\tau},0)$.
\end{Pro}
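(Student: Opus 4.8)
The plan is to follow the template of the three preceding propositions. Writing $B_t=V_t-C^J$, we have $\partial B_t/\partial r=-\,\partial C^J/\partial r$, so it suffices to establish $0<\partial C^J/\partial r<\tau D e^{-(r-\lm k)\tau}$. The first step is to notice which quantities in \eqref{call} actually depend on $r$: since $\lm'=\lm(1+k)$, $\sigma_n^2=\sigma^2+n\delta^2/\tau$ and $\gamma=\ln(1+k)$ involve only $\lm,k,\sigma,\delta$, the rate $r$ enters solely through $r_n=r+n\gamma/\tau-\lm k$, and it does so affinely with $\partial r_n/\partial r=1$. Differentiating the series term by term therefore gives
\begin{align*}
\frac{\partial C^J}{\partial r}
&=\sum_{n=0}^{\infty}\frac{(\lm'\tau)^n}{n!}\,e^{-\lm'\tau}\,\frac{\partial C(V_t,D,\tau,\sigma_n,r_n)}{\partial r_n}\\
&=\sum_{n=0}^{\infty}\frac{(\lm'\tau)^n}{n!}\,e^{-\lm'\tau}\,D\tau e^{-r_n\tau}N(d_{2n}),
\end{align*}
where the $n$-th term's interest-rate sensitivity $D\tau e^{-r_n\tau}N(d_{2n})=Rho>0$ is the listed Black--Scholes fact (with the strike factor $D$ reinstated, exactly as in the bound for $\partial B_t/\partial D$).

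Next I would substitute $r_n\tau=(r-\lm k)\tau+n\gamma$, so that $e^{-r_n\tau}=e^{-(r-\lm k)\tau}e^{-n\gamma}$, and pull the $n$-independent factor out of the sum:
$$\frac{\partial C^J}{\partial r}=\tau D\,e^{-(r-\lm k)\tau}\sum_{n=0}^{\infty}\frac{(\lm'\tau)^n}{n!}\,e^{-\lm'\tau}\,e^{-n\gamma}\,N(d_{2n}).$$
Every summand is strictly positive, which already yields $\partial C^J/\partial r>0$ and hence part (i), $\partial B_t/\partial r<0$. For the lower bound in (ii), note that $0<N(d_{2n})<1$ for each $n$ and that $e^{-n\gamma}=(1+k)^{-n}\le 1$; hence the series is dominated term-by-term by $\sum_{n}\frac{(\lm'\tau)^n}{n!}e^{-\lm'\tau}=1$, so it lies in $(0,1)$. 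This gives $0<\partial C^J/\partial r<\tau D e^{-(r-\lm k)\tau}$, and therefore $\partial B_t/\partial r\in(-\tau D e^{-(r-\lm k)\tau},0)$, as claimed.

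The algebra above is routine; the two points deserving genuine care are (a) the legitimacy of differentiating \eqref{call} term by term, which follows because on any compact parameter range the differentiated terms admit the summable majorant $\tau D\,e^{-(r-\lm k)\tau}\frac{(\lm'\tau)^n}{n!}e^{-\lm'\tau}$, so the differentiated series converges locally uniformly; and (b) the estimate $e^{-n\gamma}\le1$, which uses $1+k=E(Y_i+1)>0$ together with $k\ge0$. If $k<0$ this step fails — indeed $\sum_{n}\frac{(\lm'\tau)^n}{n!}e^{-\lm'\tau}e^{-n\gamma}=e^{-\lm k\tau}>1$ — and the clean statement then becomes $\partial B_t/\partial r\in(-\tau D e^{-r\tau},0)$. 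I would flag that the stated form of (ii) tacitly assumes $k\ge0$, consistently with the bound already used for $\partial B_t/\partial D$. Thus the main obstacle is not the computation but pinning down this positivity hypothesis on $k$ and justifying the interchange of summation and differentiation.
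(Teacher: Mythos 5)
Your proof follows exactly the paper's route: term-by-term differentiation of the series using the Black--Scholes Rho, the substitution $e^{-r_n\tau}=e^{-(r-\lm k)\tau}e^{-n\gamma}$, and bounding the residual Poisson-weighted sum by $1$. You are in fact more careful than the paper, which silently assumes $k\ge 0$ when asserting that the remaining sum lies in $(0,1)$ (the weights $\frac{(\lm'\tau)^n}{n!}e^{-\lm'\tau}e^{-n\gamma}$ alone sum to $e^{-\lm k\tau}$, as you note) and which does not comment on the legitimacy of differentiating the series term by term.
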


\begin{proof}

Actually,
\begin{align*}
&\dfrac{\partial C_J(V_t,D,\tau,\sigma^2,r,\delta^2,\lm,k)}{\partial r}\\
    &=\dfrac{\partial}{\partial r}(\sum ^{\infty}_{n=0} \frac{(\lm'\tau)^n}{n!} e^{-\lm' \tau} C(V_t,D,\tau,\sigma_n,r_n))  \\
    &=\sum ^{\infty}_{n=0} \frac{(\lm'\tau)^n}{n!}
e^{-\lm' \tau} \dfrac{\partial C(V_t,D,\tau,\sigma_n,r_n)}{\partial r_n} \\
    &=\sum ^{\infty}_{n=0} \frac{(\lm'\tau)^n}{n!}
e^{-\lm' \tau}\tau D e^{-r_n\tau}N(d_{2n})\\
	& = \tau De^{-(r -\lm k )\tau} \sum ^{\infty}_{n=0} \frac{(\lm'\tau)^n}{n!}
	 e^{-n\gamma}N(d_{2n})\\
	 &\in (0,\tau De^{-(r -\lm k )\tau}).
\end{align*}
Then, $\dfrac{\partial B_t}{\partial r}=-\dfrac{\partial C_J(V_t,D,\tau,\sigma^2,r,\delta^2,\lm,k)}{\partial r}
\in (-\tau De^{-(r -\lm k )\tau},0)$.

\end{proof} 
\finproof

Since the call option increases as $r$ goes up, $B_t$ must decrease the money market looks more attractive.(See Figure \ref{fig:Bt-r}).

\begin{figure}[htbp]
\centering
\includegraphics[width = .5\textwidth]{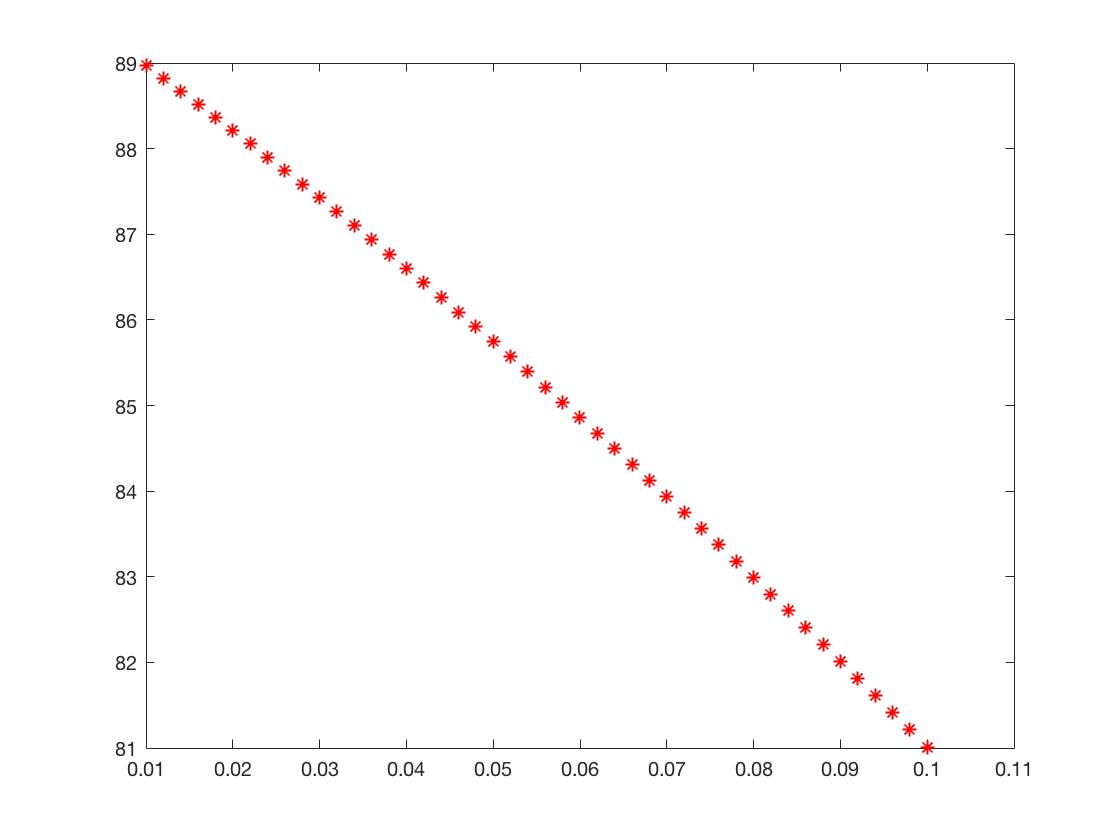}
\caption{Bond price - $r$. ( $r$ is from $0.01$ to $0.1$ with step size $0.002$, $V_t=100,$ $D=110,$ $\tau=2,$ $\sigma=0.2,$ $\lm= 0.1,$ $\mu=-0.2,$ $\delta=0.6,$ upbound of summation $n=50.$)}
\label{fig:Bt-r}
\end{figure}

\begin{Pro}\label{app1} 
(i) The call price is increasing in time-to-maturity, $\tau$, for the log-normal jumps process if $r-\lm k \geq 0$.

(ii) The bond price is decreasing in time-to-maturity, $\tau$, for the log-normal jumps process $r-\lm k \geq 0$.

\end{Pro}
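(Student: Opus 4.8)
Since $B_t=V_t-C^J(V_t,D,\tau,\sigma^2,r,\delta^2,\lm,k)$ and the current level $V_t$ carries no dependence on the time-to-maturity, statement (ii) is immediate from (i): $\partial B_t/\partial\tau=-\partial C^J/\partial\tau$. So the whole content is to show $\partial C^J/\partial\tau>0$ whenever $r-\lm k\geq0$.

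The plan is to differentiate the Merton series $C^J=\sum_{n\geq0}w_n(\tau)\,C(V_t,D,\tau,\sigma_n,r_n)$, with $w_n(\tau)=\dfrac{(\lm'\tau)^n}{n!}e^{-\lm'\tau}$, term by term. The simplification that makes this tractable is that, although $\tau$ enters each summand in four places (the weight $w_n$, the explicit maturity argument, and both $\sigma_n^2=\sigma^2+n\delta^2/\tau$ and $r_n=r+n\gamma/\tau-\lm k$), the Black--Scholes price $C(V_t,D,\tau,\sigma_n,r_n)$ depends on $(\tau,\sigma,r)$ only through the two quantities $u_n:=r_n\tau=(r-\lm k)\tau+n\gamma$ and $v_n:=\sigma_n^2\tau=\sigma^2\tau+n\delta^2$, each affine in $\tau$ with $\partial u_n/\partial\tau=r-\lm k$ and $\partial v_n/\partial\tau=\sigma^2$. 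Writing $C_n$ for $C$ as a function of $(u_n,v_n)$, the chain rule gives $\partial C_n/\partial\tau=(r-\lm k)C_u+\sigma^2C_v$, where $C_u=De^{-u_n}N(d_{2n})>0$ and $C_v=\dfrac{V_t\,n(d_{1n})}{2\sqrt{v_n}}>0$ are, up to the factors $\tau$ and $2\sigma\tau$, exactly the $\mathrm{Rho}$ and $\mathrm{Vega}$ already recorded as positive above.

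Combining this with the elementary identity $w_n'(\tau)=\lm'\bigl(w_{n-1}(\tau)-w_n(\tau)\bigr)$ (convention $w_{-1}\equiv0$) and an Abel resummation of the weight-derivative part, the differentiated series splits as
\begin{align*}
\frac{\partial C^J}{\partial\tau}
 &=(r-\lm k)\sum_{n\geq0}w_n(\tau)\,C_u(u_n,v_n)\\
 &\quad+\sigma^2\sum_{n\geq0}w_n(\tau)\,C_v(u_n,v_n)\\
 &\quad+\lm'\sum_{n\geq0}w_n(\tau)\bigl(C_{n+1}-C_n\bigr).
\end{align*}
Term-by-term differentiation is legitimate because $0\leq C_n\leq V_t$ and the Poisson tail of $w_n$ outpaces the geometric factor $e^{-n\gamma}$ sitting inside $C_u$. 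Under $r-\lm k\geq0$ the first sum is nonnegative and the second is strictly positive (every $C_v>0$), so the proof reduces to controlling the last sum.

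The main obstacle is therefore the resummed term $\lm'\sum_n w_n(\tau)(C_{n+1}-C_n)$ --- the effect of more jumps being accommodated as $\tau$ lengthens. I would settle it by proving the conditional Black--Scholes values are nondecreasing in the jump count, $C_{n+1}\geq C_n$: passing from $n$ to $n+1$ increases the effective variance by $v_{n+1}-v_n=\delta^2>0$, which raises $C$ since $C_v>0$, and shifts the discount exponent by $u_{n+1}-u_n=\gamma=\ln(1+k)$, which moves $C$ with the sign of $\gamma$ since $C_u>0$; when $\gamma\geq0$ both effects push upward, the last sum is $\geq0$, and $\partial C^J/\partial\tau>0$ follows, hence $\partial B_t/\partial\tau<0$. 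The point I expect to be genuinely delicate is this monotonicity in $n$ when $\gamma<0$, where one must balance $\delta^2C_v$ against $|\gamma|C_u$; an alternative that sidesteps it is the probabilistic route $C^J=\mathbb{E}\bigl[(V_tM_\tau-De^{-r\tau})^+\bigr]$ with $M_\tau$ the unit-mean martingale driving the discounted asset value, combining convexity of the payoff in $M_\tau$ with the monotonicity of the deterministic term $De^{-r\tau}$ in $\tau$.
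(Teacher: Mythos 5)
Your decomposition is essentially the paper's own: differentiate the series term by term, shift the index in the weight-derivative part to obtain $\lm'\sum_n w_n(\tau)\,(C_{n+1}-C_n)$ (the paper's $S_1$), and collect the explicit-$\tau$, $r_n$- and $\sigma_n$-dependence of each Black--Scholes term into $(r-\lm k)\,De^{-r_n\tau}N(d_{2n})+\frac{\sigma^2}{2\sigma_n\sqrt{\tau}}\,x\,n(d_{1n})$ (the paper's $S_2$). Your $(u_n,v_n)=(r_n\tau,\sigma_n^2\tau)$ bookkeeping is a tidier derivation of the identical expression, and your identification of $C_u,C_v$ with Rho and Vega is correct. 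The step you explicitly leave open --- monotonicity $C_{n+1}\geq C_n$ when $\gamma=\ln(1+k)<0$ --- is exactly the step the paper does not prove either: the appendix computes $\partial C/\partial n = xN'(d_{1n})(\partial_n d_{1n}-\partial_n d_{2n})+\gamma\tau De^{-r_n\tau}N(d_{2n})$, which is precisely your $\delta^2 C_v+\gamma C_u$, and then asserts positivity without addressing the sign of $\gamma$. Since $\gamma=\mu+\tfrac{1}{2}\delta^2$ is negative for the paper's own illustrative parameters ($\mu=-0.2$, $\delta=0.6$ gives $\gamma=-0.02$), the case you flag is not vacuous, and the balance of $\delta^2 C_v$ against $|\gamma|C_u$ can genuinely fail pointwise for strongly negative $\mu$. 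So as written your argument (like the paper's) establishes the proposition only under the extra hypothesis $\gamma\geq0$.

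The probabilistic route you sketch at the end is in fact the correct way to close the gap, and it is stronger than the term-by-term computation. Under the pricing measure the discounted asset value $e^{-rt}V_t$ is a martingale (the $-\lm k$ compensator in $r_n$ exists for exactly this purpose), so for $\tau_1<\tau_2$ conditional Jensen gives $E\bigl[e^{-r\tau_2}(V_{\tau_2}-D)^+\mid\mathcal{F}_{\tau_1}\bigr]\geq\bigl(e^{-r\tau_1}V_{\tau_1}-e^{-r\tau_2}D\bigr)^+\geq e^{-r\tau_1}\bigl(V_{\tau_1}-D\bigr)^+$ whenever $r\geq0$; taking expectations yields $C^J(\tau_2)\geq C^J(\tau_1)$ with no restriction on the sign of $\gamma$ (and shows the hypothesis $r-\lm k\geq0$ is not where the difficulty lies). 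You should either add $\gamma\geq0$ as an explicit hypothesis for the $S_1$ bound, or replace the claim $C_{n+1}\geq C_n$ by this martingale argument; the convergence justification for term-by-term differentiation, which you only gesture at, also deserves one line (uniform bounds $0\leq C_n\leq x$ and $0\leq C_u\leq De^{-u_n}$ against the Poisson weights suffice).
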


\begin{proof} 
See details at Appendix part 1.

We have $\frac{\partial B_t}{\partial k}<0$.
$B_t$ is decreasing due to the value of call increases when time-to-maturity is bigger.

In some extreme case, when $S_2$ is very small as a negative number, the sum of $S_1$ and $S_2$ can be negative which causes the tendency of $B_t$ w.r.t. $\tau$ is not decreasing.

\end{proof} 
\finproof

These two illustrations are listed as figure \ref{fig:Bt-tau} and figure \ref{fig:Bt-taubad}.

\begin{figure}[!htbp]
\centering
\includegraphics[width = .5\textwidth]{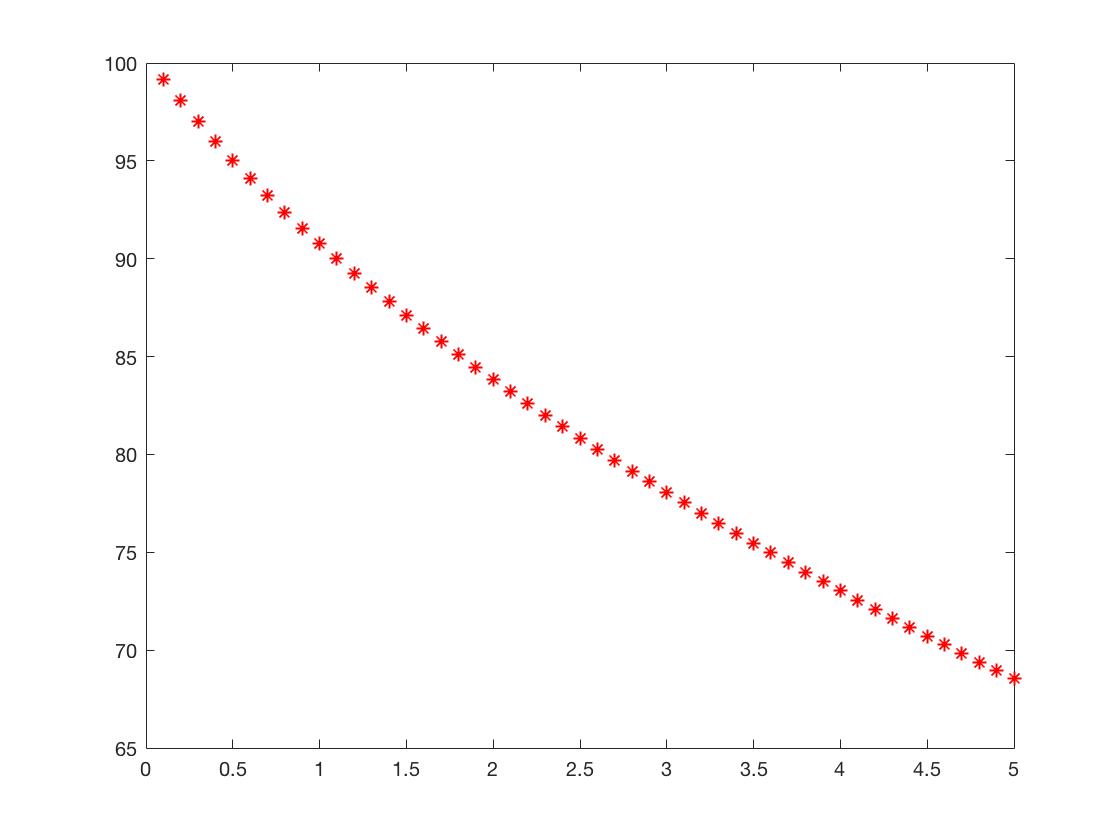}
\caption{Bond price - $\tau$ with condition satisfied. ($\tau$ is from $0.1$ to $5$ with step size $0.1$, $V_t=100,$ $D=110,$ $\sigma=0.2,$ $r=0.05,$ $\lm= 0.1,$ $\mu=-0.2,$ $\delta=0.6,$ upbound of summation $n=50.$) }
\label{fig:Bt-tau}
\end{figure}

\begin{figure}[!htbp]
\centering
\includegraphics[width = .5\textwidth]{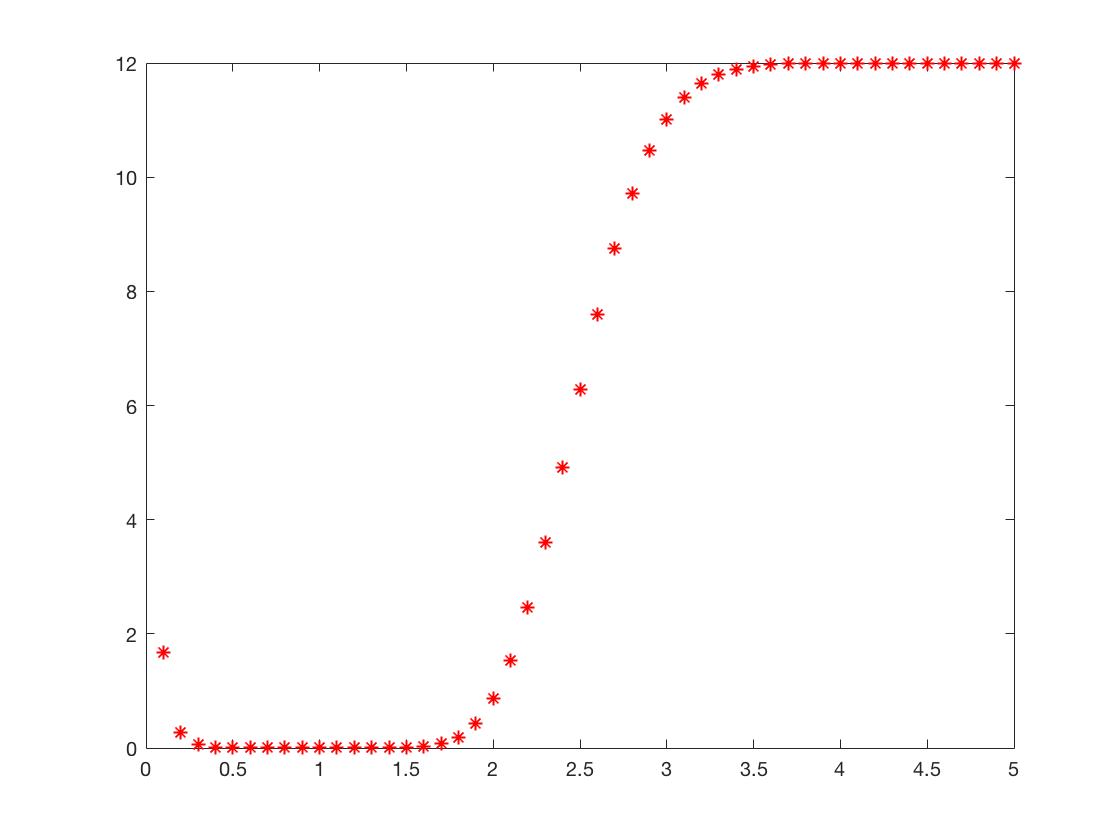}
\caption{Bond price - $\tau$ trend is somehow increasing when condition $r-\lm k\geq 0$  is not satisfied at extreme case. (Here $k=199.34$ under $\mu=0.8, \delta = 3$. $\lm = 0.1,r=0.05$. $r-\lm k =-19.88$, $V_t=12,D=10$. $\tau$ is from 0.1 to 5 with step pace 0.1.)}
\label{fig:Bt-taubad}
\end{figure}

\begin{Pro} 
The bond price is decreasing in $\delta$ for the log-normal jumps process.
\end{Pro}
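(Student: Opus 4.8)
The plan is to use, once more, the identity $B_t = V_t - C^J(V_t,D,\tau,\sigma^2,r,\delta^2,\lm,k)$, so that it suffices to show the jump call price $C^J$ is increasing in $\delta$. The decisive structural point is that, with the parameters of formula~\eqref{call} held fixed (in particular $k$, hence $\gamma=\ln(1+k)$, $\lm'=\lm(1+k)$ and $r_n=r+n\gamma/\tau-\lm k$), the quantity $\delta$ enters \emph{only} through the effective volatilities $\sigma_n=\bigl(\sigma^2+n\delta^2/\tau\bigr)^{1/2}$. Thus the dependence of each summand on $\delta$ runs through a single channel, and the situation is formally the same as in the earlier computation of $\partial C^J/\partial\sigma$.

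Carrying this out, I would differentiate \eqref{call} term by term and apply the chain rule through $\sigma_n$:
\[
\frac{\partial C^J}{\partial\delta}=\sum_{n=0}^{\infty}\frac{(\lm'\tau)^n}{n!}\,e^{-\lm'\tau}\,\frac{\partial C(V_t,D,\tau,\sigma_n,r_n)}{\partial\sigma_n}\cdot\frac{\partial\sigma_n}{\partial\delta}.
\]
Here $\partial\sigma_n/\partial\delta=n\delta/(\tau\sigma_n)\ge 0$, with equality exactly at $n=0$, and from the listed facts $\partial C/\partial\sigma_n=\mathrm{Vega}=x\sqrt{\tau}\,n(d_{1n})>0$. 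Every summand is therefore nonnegative, and since $\lm>0$ and $\delta>0$ the terms with $n\ge 1$ carry strictly positive Poisson weight and contribute strictly, so $\partial C^J/\partial\delta>0$ and hence $\partial B_t/\partial\delta=-\partial C^J/\partial\delta<0$, as claimed; pulling the common factor out gives the explicit form $\partial C^J/\partial\delta = x\sqrt{\tau}\sum_{n=1}^{\infty}\frac{(\lm'\tau)^n}{n!}e^{-\lm'\tau}\,n(d_{1n})\,\frac{n\delta}{\tau\sigma_n}$.

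The step I expect to need genuine care is the legitimacy of differentiating the infinite series term by term. I would justify this by a uniform-convergence / dominated-convergence argument: $n(d_{1n})\le 1/\sqrt{2\pi}$ uniformly, $\partial\sigma_n/\partial\delta=n\delta/(\tau\sigma_n)\le n\delta/(\tau\sigma)$ grows only linearly in $n$, while the weights $(\lm'\tau)^n e^{-\lm'\tau}/n!$ decay faster than any exponential, so the differentiated series converges uniformly for $\delta$ near any fixed value. A second, conceptual subtlety worth flagging is the choice of parametrization: the statement is understood with the argument list of \eqref{call}, i.e.\ $k$ treated as free and $\mu=\ln(1+k)-\tfrac{1}{2}\delta^2$ derived; had $\mu$ instead been held fixed, the extra $\delta$-dependence entering $k$, $\lm'$ and $r_n$ would have to be tracked and the monotonicity would no longer be immediate. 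I would state this convention explicitly at the outset of the proof.
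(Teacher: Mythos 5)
Your proof is correct and follows essentially the same route as the paper: term-by-term differentiation of formula (\ref{call}), the chain rule through $\sigma_n$, and positivity of Vega, giving $\partial C^J/\partial\delta>0$ and hence $\partial B_t/\partial\delta<0$. One small point in your favor: your chain-rule factor $\partial\sigma_n/\partial\delta=n\delta/(\tau\sigma_n)$ is the correct one, whereas the paper writes $2n\delta/\tau$ (the derivative of $\sigma_n^2$, missing the factor $1/(2\sigma_n)$); the sign, and therefore the conclusion, is unaffected.
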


\begin{proof}

We have
\begin{align*}
&\dfrac{\partial C_J(V_t,D,\tau,\sigma^2,r,\delta^2,\lm,k)}{\partial \delta}\\
    &=\dfrac{\partial}{\partial \delta}(\sum ^{\infty}_{n=0} \frac{(\lm'\tau)^n}{n!} e^{-\lm' \tau} C(V_t,D,\tau,\sigma_n,r_n))  \\
    &=\sum ^{\infty}_{n=0} \frac{(\lm'\tau)^n}{n!}
	e^{-\lm' \tau} \dfrac{\partial C(V_t,D,\tau,\sigma_n,r_n))}{\partial \sigma_n} 
	\dfrac{\partial \sigma_n} {\partial \delta} \\
    &=\sum ^{\infty}_{n=0} \frac{(\lm'\tau)^n}{n!}
	e^{-\lm' \tau}x\sqrt{\tau}n(d_{1n})\frac{2n\delta}{\tau}\\
	& =\frac{2x\delta}{\sqrt{\tau}} \sum ^{\infty}_{n=0} \frac{(\lm'\tau)^n}{n!}
	e^{-\lm' \tau}n(d_{1n})n  >0.
\end{align*}
Then,$\dfrac{\partial B_t}{\partial \delta}=-\dfrac{\partial C_J(V_t,D,\tau,\sigma^2,r,\delta^2,\lm,k)}{\partial \delta}<0$.

\end{proof}
\finproof

See Figure \ref{fig:Bt-delta} for the simulation.

\begin{figure}[!htbp]
\centering
\includegraphics[width = .5\textwidth]{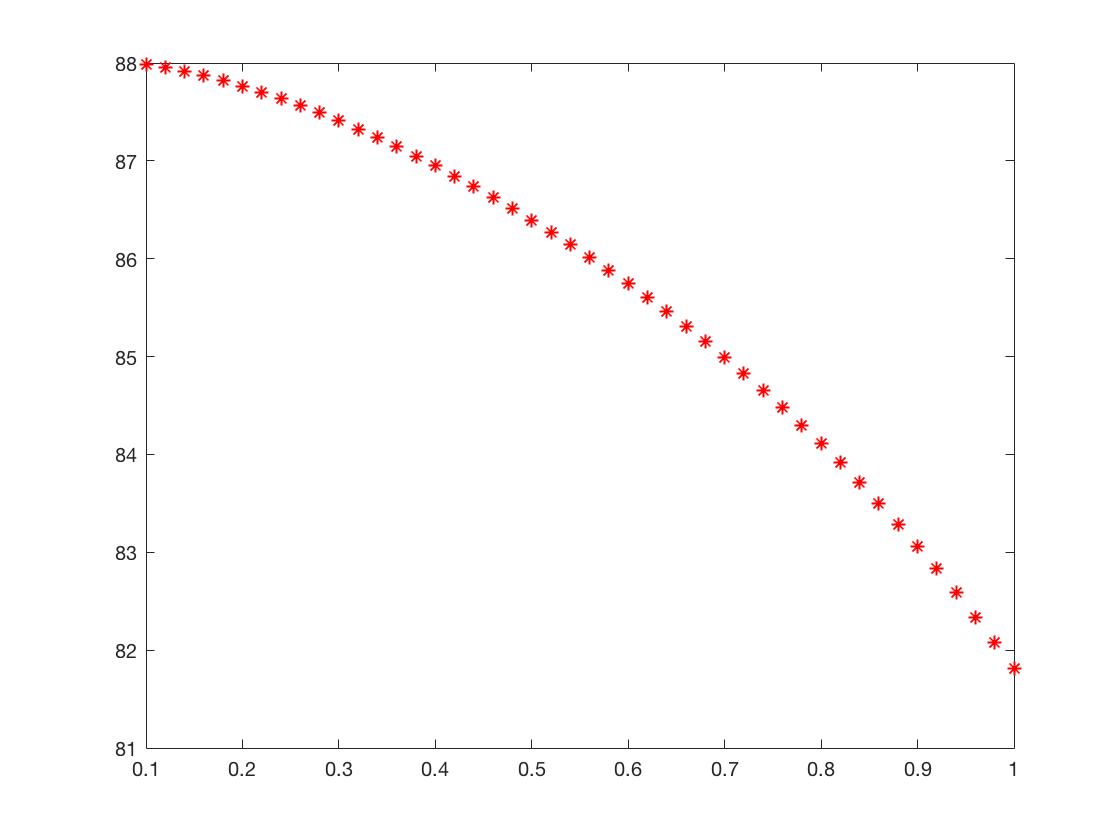}
\caption{Bond price - $\delta$. ($\delta$ is from $0.01$ to $1$ with step size $0.02$, $V_t=100,$ $D=110,$ $\tau=2,$ $\sigma=0.2,$ $r=0.05,$ $\lm= 0.1,$ $\mu=-0.2,$ upbound of summation $n=50.$)}
\label{fig:Bt-delta}
\end{figure}

\begin{Pro}
The bond price is decreasing in $\lambda$ for the log-normal jumps process.
\end{Pro}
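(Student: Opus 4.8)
The plan is to reduce the claim to showing that the jump call price $C^J$ is \emph{increasing} in $\lambda$, since $B_t=V_t-C^J$ and hence $\partial_\lambda B_t=-\,\partial_\lambda C^J$. Write $p_n(a)=\frac{a^n}{n!}e^{-a}$ for the Poisson weights and $C_n=C(V_t,D,\tau,\sigma_n,r_n)$ for the $n$-jump Black--Scholes value. The parameter $\lambda$ enters formula~(\ref{call}) through exactly two channels: the weights $p_n(\lambda'\tau)$ via $\lambda'=\lambda(1+k)$, and the rate argument $r_n=r+n\gamma/\tau-\lambda k$ via the term $-\lambda k$ (the volatility arguments $\sigma_n^2=\sigma^2+n\delta^2/\tau$ carry no $\lambda$). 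Differentiating term by term — the interchange being justified, as in the earlier propositions, by the exponential decay of the Poisson weights — gives
\[ \partial_\lambda C^J=\tau(1+k)\sum_{n}p_n'(\lambda'\tau)\,C_n-k\sum_{n}p_n(\lambda'\tau)\,\partial_{r_n}C_n. \]

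For the first channel I would use the elementary identity $p_n'(a)=p_{n-1}(a)-p_n(a)$ (with $p_{-1}\equiv 0$) and reindex to obtain $\sum_n p_n'(\lambda'\tau)\,C_n=\sum_n p_n(\lambda'\tau)\,(C_{n+1}-C_n)$, so that this channel is $\tau(1+k)$ times the expected one-step forward difference of $C_n$ under a $\mathrm{Poisson}(\lambda'\tau)$ law. Since $\sigma_{n+1}^2=\sigma_n^2+\delta^2/\tau>\sigma_n^2$ and $r_{n+1}=r_n+\gamma/\tau$, and since the Black--Scholes call is strictly increasing in volatility ($Vega>0$) and in the rate ($Rho>0$) by the facts recalled in Section~2, the difference $C_{n+1}-C_n$ is positive as soon as $\gamma=\mu+\frac12\delta^2\ge 0$; the first channel is then nonnegative. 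For the second channel, $\partial_{r_n}C_n=\tau D e^{-r_n\tau}N(d_{2n})>0$ (the $n$-th Rho), so that term has the sign of $-k$, and $k=e^{\gamma}-1$ has the same sign as $\gamma$.

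The main obstacle — which I expect to be the actual content of the proof rather than any single computation — is that the two channels pull against each other: $1+k>0$ always (as $k=e^{\gamma}-1>-1$), while $-k\le 0$ precisely in the regime $\gamma\ge 0$ in which the first channel is manifestly nonnegative, so a purely term-by-term sign argument needs both a hypothesis on $\gamma$ (equivalently on $k$), in the spirit of the condition $r-\lambda k\ge 0$ of Proposition~\ref{app1}, \emph{and} a quantitative estimate showing the telescoping channel dominates the Rho channel. The route I would ultimately prefer is probabilistic: write $C^J=e^{-r\tau}\,\E^Q\!\big[(V_t e^{X_\tau}-D)^+\big]$ with $X_\tau=(r-\frac12\sigma^2-\lambda k)\tau+\sigma W_\tau+\sum_{i=1}^{N_\tau}\ln(1+Y_i)$ and $N_\tau\sim\mathrm{Poisson}(\lambda\tau)$; differentiating in $\lambda$ again splits into a drift contribution ($-k\tau$ inside the payoff) and an intensity contribution (an extra independent jump $1+Y$), and the convexity of $a\mapsto(aS-D)^+$ together with $\E[1+Y]=1+k$ lets one compare the extra-jump term to the drift term by Jensen's inequality. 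Reconciling these two opposing contributions, and pinning down whether and which assumption on $k$ is really required, is the crux; everything else is bookkeeping with the Greeks already listed.
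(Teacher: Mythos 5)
Your setup is the same computation the paper carries out in its Appendix part 2: differentiate term by term, split $\partial_\lambda C^J$ into the Poisson-weight channel (via $\lambda'=\lambda(1+k)$) and the drift channel (via the $-\lambda k$ inside $r_n$), telescope the weight derivative, and express the drift channel through the $n$-th Rho together with $C_n=xN(d_{1n})-De^{-r_n\tau}N(d_{2n})$. After the same bookkeeping the paper lands on
\[
\partial_\lambda C^J=\sum_{n=0}^{\infty}\frac{(\lambda'\tau)^n}{n!}e^{-\lambda'\tau}\,C_n\Bigl(\frac{n}{\lambda}-\tau-k\tau\,\frac{xN(d_{1n})}{C_n}\Bigr),
\]
which is exactly your two channels recombined. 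So up to the point where you stop, your proposal reproduces the paper's argument.

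The gap you flag at the end --- that the two channels pull against each other and that a term-by-term sign argument cannot close without some hypothesis on $k$ (equivalently $\gamma$) plus a domination estimate --- is real, and you should know that the paper does not close it either. Its final step is the assertion that, since $n$ can be large, the bracket $\frac{n}{\lambda}-\tau-k\tau\,xN(d_{1n})/C_n$ is ``positive for most cases.'' That is not a proof: the low-$n$ terms are genuinely negative (at $n=0$ the bracket equals $-\tau\bigl(1+k\,xN(d_{10})/C_0\bigr)<0$ for $k>0$), the Poisson weights concentrate near $n\approx\lambda'\tau$ rather than at large $n$, and no estimate is given showing the positive tail dominates. Indeed, if $C_n$ were constant in $n$ the whole sum would reduce to $k\tau\bigl(C-xN(d_1)\bigr)=-k\tau De^{-r\tau}N(d_2)<0$ for $k>0$, so any correct proof must quantitatively exploit the growth of $C_n$ in $n$ --- which itself, as you note, uses $Rho>0$ and so is only clean when $\gamma\ge 0$. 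Your probabilistic reformulation via an extra independent jump and Jensen's inequality is a genuinely different and more promising route, but since you leave it as a sketch, neither your attempt nor the paper's contains a complete proof of the unconditional statement; an explicit hypothesis (e.g.\ on the sign of $k$) appears to be unavoidable, just as $r-\lambda k\ge 0$ was needed for the $\tau$-sensitivity.
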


\begin{proof}
See details at Appendix part 2.
Finally, 

 $\frac{\partial B_t}{\partial \lm}<0$ holds.

\end{proof}
\finproof

In fact, since $\lambda$ is the intensity of Poisson process, $\lambda$ increasing means the jump rate is bigger, the fluctuation will make the call price go up, so the bond price is decreasing accordingly. (See Figure \ref{fig:Bt-lambda})

\begin{figure}[htbp]
\centering
\includegraphics[width = .5\textwidth]{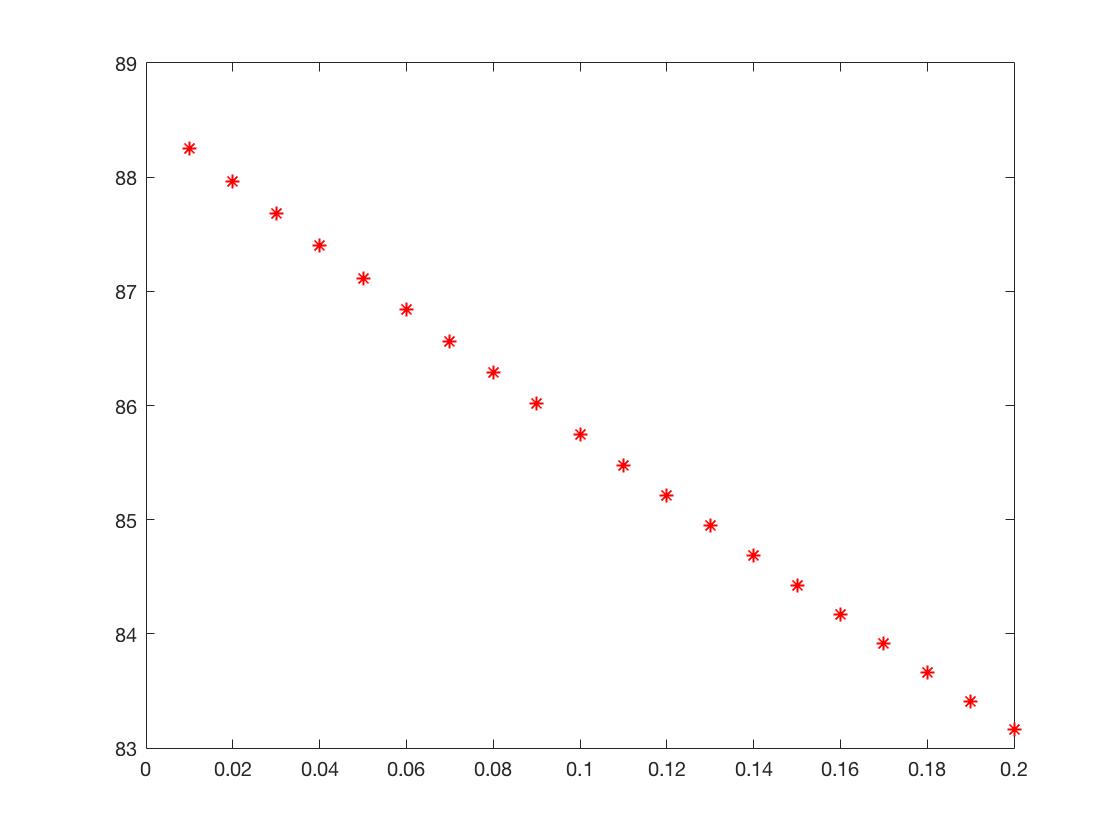}
\caption{Bond price - $\lm$. ($\lm$ is from $0.01$ to $0.2$ with step size $0.01$, $V_t=100,$ $D=110,$ $\tau=2,$ $\sigma=0.2,$ $r=0.05,$ $\mu=-0.2,$ $\delta=0.6,$ upbound of summation $n=50.$)}
\label{fig:Bt-lambda}
\end{figure}

\begin{Pro} 
The bond price is decreasing in $k$ for the log-normal jumps process.
\end{Pro}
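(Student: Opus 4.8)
The plan is to prove the equivalent statement that $C^J$ is \emph{increasing} in $k$, since $B_t=V_t-C^J$ and $V_t$ carries no $k$. The first step is to recast Merton's series (\ref{call}) into the ``martingale--corrected'' form in which all of the $k$--dependence sits in a single randomized spot. Using $\lm'=\lm(1+k)$, $r_n\tau=r\tau+n\gamma-\lm k\tau$, $\sigma_n^2=\sigma^2+n\delta^2/\tau$ and $\gamma=\ln(1+k)$, one verifies the elementary algebraic identity
$$\frac{(\lm'\tau)^n}{n!}e^{-\lm'\tau}\,C(V_t,D,\tau,\sigma_n,r_n)=q_n\,C(\tilde V_n,D,\tau,\sigma_n,r),\qquad q_n:=\frac{(\lm\tau)^n}{n!}e^{-\lm\tau},\quad \tilde V_n:=V_t(1+k)^n e^{-\lm k\tau},$$
the point being that $\ln(V_t/D)+r_n\tau=\ln(\tilde V_n/D)+r\tau$ while $\sigma_n$ is free of $k$. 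Hence $C^J=\sum_{n\ge0}q_n\,C(\tilde V_n,D,\tau,\sigma_n,r)$ with $q_n,\sigma_n,r$ independent of $k$, and one has the martingale identity $\sum_{n\ge0}q_n\tilde V_n=V_t e^{-\lm k\tau}e^{-\lm\tau}\sum_{n\ge0}(\lm\tau(1+k))^n/n!=V_t$ for every admissible $k$.

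Next I would differentiate this series term by term in $k$ (legitimate because $q_n\tilde V_n$ and $q_n n\tilde V_n$ are dominated by terms of convergent exponential series and $0<N(d_{1n})<1$). Since the only $k$--dependent ingredient of the $n$--th summand is the spot $\tilde V_n$, and the total derivative of a Black--Scholes call with respect to its spot is the delta $N(d_{1n})$ (the first of the listed facts, $C_x=N(d_1)=\Delta$),
$$\frac{\partial C^J}{\partial k}=\sum_{n\ge0}q_n\,N(d_{1n})\,\frac{\partial\tilde V_n}{\partial k},\qquad \frac{\partial\tilde V_n}{\partial k}=\tilde V_n\Big(\frac{n}{1+k}-\lm\tau\Big)=:\tilde V_n\,a_n .$$
Differentiating the martingale identity in $k$ gives $\sum_{n\ge0}q_n\tilde V_n a_n=0$, so $\partial C^J/\partial k$ is precisely the $\{q_n\tilde V_n\}$--weighted covariance of the two sequences $N(d_{1n})$ and $a_n$. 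Equivalently, since the partial sums satisfy $\sum_{n=0}^{m}q_n\tilde V_n a_n=-\lm\tau\,q_m\tilde V_m<0$ and tend to $0$ as $m\to\infty$, Abel summation yields the compact expression
$$\frac{\partial C^J}{\partial k}=\lm\tau\sum_{n\ge1}q_{n-1}\tilde V_{n-1}\big(N(d_{1n})-N(d_{1,n-1})\big).$$

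It remains to sign the right--hand side. The factor $a_n=\frac{n}{1+k}-\lm\tau$ is strictly increasing in $n$ (recall $1+k=e^{\mu+\delta^2/2}>0$), so by Chebyshev's sum inequality — equivalently by nonnegativity of each summand in the Abel form — we get $\partial C^J/\partial k\ge0$, hence $\partial B_t/\partial k\le0$, \emph{provided} the Black--Scholes delta $N(d_{1n})$, i.e.\ the adjusted $d_{1n}$, is nondecreasing along $n=0,1,2,\dots$. Writing $d_{1n}=\big(\ln(V_t/D)+(r-\lm k)\tau+n\gamma\big)/\sqrt{\sigma^2\tau+n\delta^2}+\tfrac12\sqrt{\sigma^2\tau+n\delta^2}$ and differentiating in $n$, this monotonicity is equivalent to $(\mu+\delta^2)(\sigma^2\tau+n\delta^2)\ge\delta^2\big(\ln(V_t/D)+(r-\lm k)\tau\big)-(\mu+\tfrac12\delta^2)\sigma^2\tau$ for all $n\ge0$, which holds in particular when $\mu+\delta^2\ge0$ together with a mild upper bound on $\ln(V_t/D)+(r-\lm k)\tau$. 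I expect this monotonicity to be the genuine obstacle: just as in Proposition~\ref{app1} (where the caveat was $r-\lm k\ge0$), it can fail for a deeply out--of--the--money bond, i.e.\ for very large face value $D$, and then the weighted covariance — and the monotonicity of $B_t$ in $k$ — need not be definite. When it does hold the summands are not all zero, so the decrease is strict, completing the argument.
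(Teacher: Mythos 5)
Your route is genuinely different from the paper's and, in the end, more informative. The paper differentiates Merton's series directly in $k$ (collecting contributions from $(\lambda'\tau)^n$, from $e^{-\lambda'\tau}$, and from $C$ through $r_n$), and after the same cancellations you achieve with the delta trick it lands on exactly your expression, $\sum_{n\ge0}\frac{(\lambda'\tau)^n}{n!}e^{-\lambda'\tau}\,\frac{n-\lambda'\tau}{1+k}\,x\,N(d_{1n})$, which coincides term by term with your $\sum_n q_n\tilde V_n a_n N(d_{1n})$ because $q_n\tilde V_n=x\,\frac{(\lambda'\tau)^n}{n!}e^{-\lambda'\tau}$ and your recentred $d_{1n}$ (built from $\tilde V_n$ and $r$) equals the paper's $d_{1n}$ (built from $V_t$ and $r_n$). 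The difference is what happens next: the paper concludes positivity by asserting that $\frac{n-\lambda'\tau}{1+k}$ ``is positive for most cases,'' which is not a proof --- the weights $\frac{(\lambda'\tau)^n}{n!}e^{-\lambda'\tau}\frac{n-\lambda'\tau}{1+k}$ sum to zero, so the negative low-$n$ terms cannot be waved away. Your reformulation as a mean-zero-weighted covariance, plus the Abel/Chebyshev step, is the right way to extract a sign, and it correctly exposes that the conclusion requires $n\mapsto N(d_{1n})$ to be nondecreasing; the proposition as stated unconditionally is not actually established by the paper's own argument, and your conditional version is the honest one. All of your intermediate identities check out (the mixing representation $C^J=\sum_n q_n C(\tilde V_n,D,\tau,\sigma_n,r)$, the martingale identity $\sum_n q_n\tilde V_n=V_t$, $\partial\tilde V_n/\partial k=\tilde V_n(\tfrac{n}{1+k}-\lambda\tau)$, and the telescoping partial sums $\sum_{n=0}^m q_n\tilde V_n a_n=-\lambda\tau q_m\tilde V_m$). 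One small correction to your closing discussion: your monotonicity condition is equivalent to $(\mu+\delta^2)n\delta^2+(2\mu+\tfrac{3}{2}\delta^2)\sigma^2\tau\ge\delta^2\bigl(\ln(V_t/D)+(r-\lambda k)\tau\bigr)$, and a very large face value $D$ sends the right-hand side to $-\infty$, making the condition easier, not harder, to satisfy; the genuine failure regimes are $V_t\gg D$ (the embedded call deep in the money) or $\mu+\delta^2<0$, not large $D$.
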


\begin{proof}
See details at Appendix part 3.

The result is $\frac{\partial B_t}{\partial k}<0$.

\end{proof}
\finproof

\begin{figure}[!htbp]
\centering
\includegraphics[width = .5\textwidth]{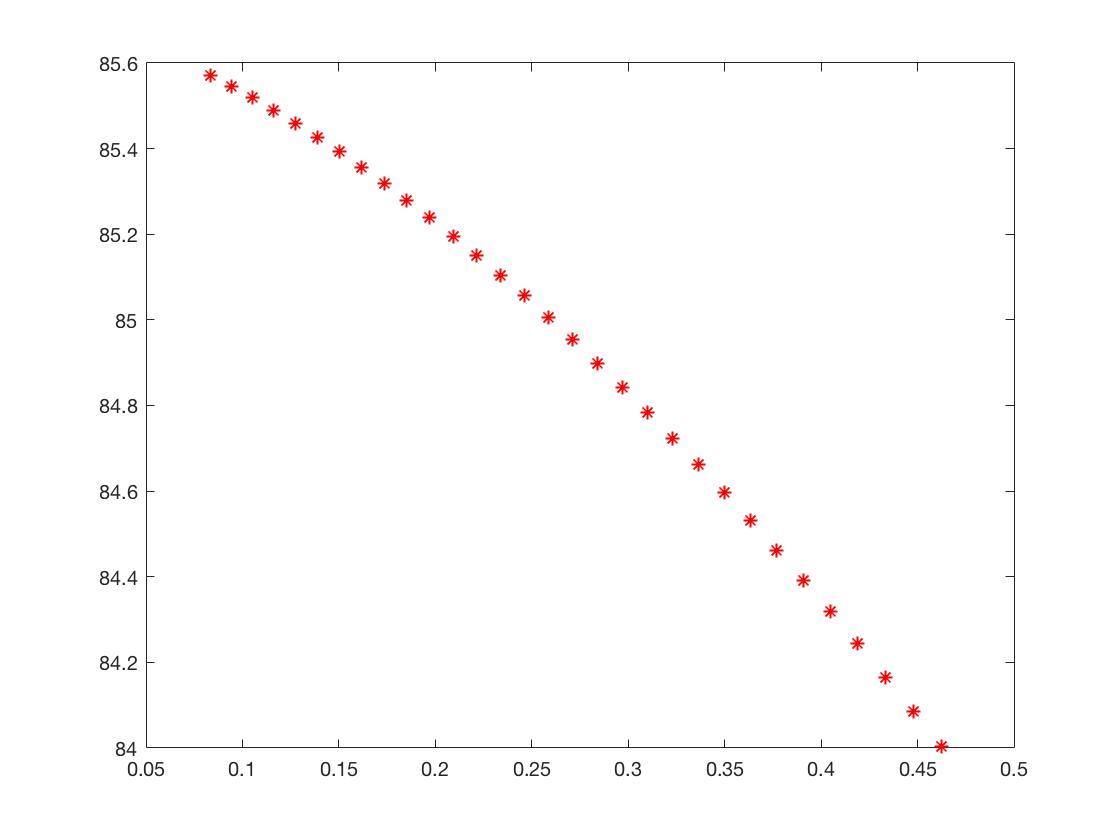}
\caption{Bond price - k. ($k=exp(\mu+\frac{1}{2}\sigma^2)-1,$ where $\mu$ is from $-0.1$ to $0.2$ with step size $0.01$, $V_t=100,$ $D=110,$ $\tau=2,$ $\sigma=0.2,$ $r=0.05,$ $\lm= 0.1,$ $\delta=0.6,$ upbound of summation $n=50.$)}
\label{fig:Bt-k}
\end{figure}

\section{Conclusion}

In this paper, we analytically explored the parameter sensitivities of bond price driven by compound Poisson process with log-normal distributed jumps. Meanwhile, we disclosed the corresponding interval of the sensitivities if available. Generally, the bond price in this case is increasing with respect to asset value $V_t$ and face value $D$, decreasing with respect to volatility $\sigma$, risk-free interest rate $r$,  log-normal standard deviation $\delta$, Poisson intensity $\lambda$ and jumps' mean $k$. For the time-to-maturity $\tau$, it is decreasing when $r-\lm k \geq 0$ and undetermined for other cases.

\newpage
\section*{Appendix}

\textbf{1. Proof of Proposition III.5}

(i) The call price is increasing in time-to-maturity, $\tau$, for the log-normal jumps process if $r-\lm k \geq 0$.

(ii) The bond price is decreasing in time-to-maturity, $\tau$, for the log-normal jumps process $r-\lm k \geq 0$.

\begin{proof}
\begin{align*}
&\dfrac{\partial C^J(V_t,D,\tau,\sigma^2,r,\delta^2,\lm,k)}{\partial \tau}\\
    &=\dfrac{\partial}{\partial \tau}(\sum ^{\infty}_{n=0} \frac{(\lm'\tau)^n}{n!} e^{-\lm' \tau} C(V_t,D,\tau,\sigma_n,r_n))  \\
    &=\sum^{\infty}_{n=1} [n(\lm' \tau)^{n-1} \frac{\lm'}{n!}e^{-\lm'\tau}C_n(V_t,D,\tau,\sigma_n,r_n)]\\ 
    & \quad +\sum^{\infty}_{n=0} [\frac{(\lm' \tau)^n}{n!}(-\lm')e^{-\lm' \tau}C_n(V_t,D,\tau,\sigma_n,r_n)]\\
    & \quad+\sum^{\infty}_{n=0} [\frac{(\lm' \tau)^n}{n!}e^{-\lm' \tau}   ( \frac{\partial C(V_t,D,\tau,\sigma_n,r_n)}{\partial \tau}\\
    & \quad+ \frac{\partial C(V_t,D,\tau,\sigma_n,r_n)}{\partial r_n}\frac{\partial r_n}{\partial \tau}\\
    & \quad+ \frac{\partial C(V_t,D,\tau,\sigma_n,r_n)}{\partial \tau}\frac{\partial \sigma_n}{\partial \tau})]\\
	& =  \sum^{\infty}_{n=1} [ \frac{(\lm' \tau)^{n-1}}{(n-1)!}\lm'e^{-\lm'\tau}C_n(V_t,D,\tau,\sigma_n,r_n)]  \\
	& \quad+\sum^{\infty}_{n=0} [\frac{(\lm' \tau)^n}{n!}(-\lm')e^{-\lm' \tau}C_n(V_t,D,\tau,\sigma_n,r_n)]\\
	& \quad+\sum^{\infty}_{n=0} [\frac{(\lm' \tau)^n}{n!}e^{-\lm' \tau}   ( \frac{\partial C(V_t,D,\tau,\sigma_n,r_n)}{\partial \tau}\\
	& \quad+ \frac{\partial C(V_t,D,\tau,\sigma_n,r_n)}{\partial r_n}\frac{\partial r_n}{\partial \tau}\\
	& \quad+ \frac{\partial C(V_t,D,\tau,\sigma_n,r_n)}{\partial \tau}\frac{\partial \sigma_n}{\partial \tau})]\\
	&=  \sum^{\infty}_{n=0} [ \frac{(\lm' \tau)^{n}}{n!}\lm'e^{-\lm'\tau}(C_{n+1}(V_t,D,\tau,\sigma_n,r_n)\\
	& \quad -C_n(V_t,D,\tau,\sigma_n,r_n))]\\
	& \quad +\sum^{\infty}_{n=0} [\frac{(\lm' \tau)^n}{n!}e^{-\lm' \tau}   ( \frac{\partial C(V_t,D,\tau,\sigma_n,r_n)}{\partial \tau} \\
	& \quad + \frac{\partial C(V_t,D,\tau,\sigma_n,r_n)}{\partial r_n}\frac{\partial r_n}{\partial \tau}\\
	& \quad + \frac{\partial C(V_t,D,\tau,\sigma_n,r_n)}{\partial \tau}\frac{\partial \sigma_n}{\partial \tau})]\\
	&= S_1+S_2,
\end{align*}

Separately, we consider the 1st part $S_1$ and 2nd part $S_2$. 

For $S_1$, since
\begin{align*}
&\dfrac{\partial C(V_t,D,\tau,\sigma_n,r_n)}{\partial n}\\
&= xN'(d_{1n})\dfrac{\partial d_{1n}}{\partial n}
	-De^{-r_n\tau}(-\dfrac{\partial r_n}{\partial n})\tau N(d_{2n}) \\
	& \quad+ De^{-r_n\tau}N'(d_{2n}) \dfrac{\partial d_{2n}}{\partial n}\\
	&=xN'(d_{1n})(\dfrac{\partial d_{1n}}{\partial n}
	-\dfrac{\partial d_{2n}}{\partial n})+De^{-r_n\tau}\gamma \tau N(d_{2n})\\
	&>0
\end{align*}
Where we used the facts:
\begin{gather*}
    xN'(d_{1n})=Ke^{-r_n \tau}N'(d_{2n}),\\
    d_{1n}-d_{2n}=\sigma_n\sqrt{\tau}\Rightarrow
    \dfrac{\partial d_{1n}}{\partial n}-\dfrac{\partial d_{2n}}{\partial n}=\frac{1}{2\sqrt{\sigma_n}} \dfrac{\delta^2}{\tau}\sqrt{\tau}>0.
\end{gather*}

  That means the function $C(V_t,D,\tau,\sigma_n,r_n)$ is increasing w.r.t. $n$, such that $S_1$ part is positive.
  
  For $S_2$, we taking partial derivative of $C(V_t,D,\tau,\sigma_n,r_n)$ w.r.t. $\tau$,
  \begin{align*}
  &S_2= \frac{\partial C(V_t,D,\tau,\sigma_n,r_n)}{\partial \tau}+ \frac{\partial C(V_t,D,\tau,\sigma_n,r_n)}{\partial r_n}\frac{\partial r_n}{\partial \tau}\\
   & \quad + \frac{\partial C(V_t,D,\tau,\sigma_n,r_n)}{\partial \tau}\frac{\partial \sigma_n}{\partial \tau}\\
  & = \frac{x\sigma_n}{2\sqrt{\tau}}n(d_{1n})+Dr_ne^{-r_n\tau}N(d_{2n})\\
  & \quad +\tau D e^{-r_n\tau}N(d_{2n})[n\gamma(-\tau^{-2})]\\
  & \quad +x\sqrt{\tau}n(d_{1n})[\frac{1}{2}(\sigma_n^2)^{-\frac{1}{2}}n\delta^2(-\tau^{-2})]\\
  & = \frac{1}{2\sqrt{\tau}}xn(d_{1n})\frac{\sigma_n^2\tau-n\delta^2}{\sigma_n\tau}
  +De^{-r_n\tau}N(d_{2n})\frac{r_n\tau-n\gamma}{\tau}\\
  & =\frac{1}{2\sqrt{\tau}}xn(d_{1n})\frac{\sigma^2}{\sigma_n}
  +De^{-r_n\tau}N(d_{2n})(r-\lm k).
  \end{align*}
here $r-\lm k $ is non-negative as the condition, then $S_2$ is also positive, such that the initial $\frac{\partial C_J(V_t,D,\tau,\sigma^2,r,\delta^2,\lm,k)}{\partial \tau}$ is positive.

Hence $\frac{\partial B_t}{\partial k}=-\frac{\partial C_J(V_t,D,\tau,\sigma^2,r,\delta^2,\lm,k)}{\partial k}<0$.
$B_t$ is decreasing due to the value of call increases when time-to-maturity is bigger.

In some extreme case, when $S_2$ is very small as a negative number, the sum of $S_1$ and $S_2$ can be negative which causes the tendency of $B_t$ w.r.t. $\tau$ is not decreasing.

\end{proof} 
\finproof

\newpage

\textbf{2. Proof of Proposition III.7}

The bond price is decreasing in $\lambda$ for the log-normal jumps process.

\begin{proof}

Firstly, we can get that the call price is increasing in $\lm$.

Take partial derivative with respect to $\lambda$ in formula \cite{Merton},

\begin{align*}
&\frac{\partial C_J(V_t,D,\tau,\sigma^2,r,\delta^2,\lm,k)}{\partial \lambda} \\
&= \sum^{\infty}_{n=1} [n(\lambda' \tau)^{n-1} (1+\beta)\frac{\tau}{n!}e^{-\lambda'\tau}C(V_t,D,\tau,\sigma_n,r_n)] \\  
& \quad +\sum^{\infty}_{n=0} [\frac{(\lambda' \tau)^n}{n!}(-\tau)(1+\beta)e^{-\lambda' \tau}C(V_t,D,\tau,\sigma_n,r_n)]\\
& \quad +\sum^{\infty}_{n=0} [\frac{(\lambda' \tau)^n}{n!}e^{-\lambda' \tau}\frac{\partial C(V_t,D,\tau,\sigma_n,r_n)}{\partial \lambda}]\\
&= \sum^{\infty}_{n=1} [ \frac{n}{\lambda} \frac{(\lambda' \tau ) ^{n}}{(n)!}e^{-\lambda'\tau)}C(V_t,D,\tau,\sigma_n,r_n)] \\
& \quad +\sum^{\infty}_{n=0} [\frac{(\lambda' \tau)^n}{n!}(-\tau)(1+\beta)e^{-\lambda' \tau}C(V_t,D,\tau,\sigma_n,r_n)]\\
& \quad+\sum^{\infty}_{n=0} [\frac{(\lambda' \tau)^n}{n!}e^{-\lambda' \tau}\frac{\partial C(V_t,D,\tau,\sigma_n,r_n)}{\partial \lambda}]\\
&=\sum^{\infty}_{n=1} [\frac{(\lambda' \tau)^n}{n!} e^{-\lambda' \tau}C(V_t,D,\tau,\sigma_n,r_n)(\frac{n}{\lambda}-(1+\beta)\tau)]\\
& \quad +(-\tau)(1+\beta)e^{-\lambda' \tau}C_0(V,\tau,r_n,\sigma_n)\\
& \quad +\sum^{\infty}_{n=0} [\frac{(\lambda' \tau)^n}{n!}e^{-\lambda' \tau}(xN'(d_{1n})\frac{\partial d_{1n}}{\partial r_n}(-\beta )\\
& \quad -Ke^{-r_n \tau}(-\tau)(-\beta )N(d_{2n})\\
& \quad -Ke^{-r_n\tau}N'(d_{2n})\frac{\partial d_{2n}}{\partial r_n}(-\beta))]\\
&=\sum^{\infty}_{n=1} [\frac{(\lambda' \tau)^n}{n!} e^{-\lambda' \tau}C(V_t,D,\tau,\sigma_n,r_n)(\frac{n}{\lambda}-(1+\beta)\tau)]\\
& \quad +(-\tau)(1+\beta)e^{-\lambda' \tau}C_0(V,\tau,r_n,\sigma_n)\\
& \quad +\sum^{\infty}_{n=0} [\frac{(\lambda' \tau)^n}{n!}e^{-\lambda' \tau}(-Ke^{-r_n \tau}\tau\beta N(d_{2n}))]\\
&=\sum^{\infty}_{n=1} [\frac{(\lambda' \tau)^n}{n!} e^{-\lambda' \tau}C_n(V,\tau,r_n,\sigma_n)(\frac{n}{\lambda}-\tau-\beta\tau\\
& \quad +\beta\tau\frac{-Ke^{-r_n \tau} N(d_{2n})}{C_n(V,\tau,r_n,\sigma_n)})]
    -e^{-\lambda' \tau}Ke^{-r_0 \tau}\tau\beta N(d_{20})\\
& \quad -\tau(1+\beta)e^{-\lambda' \tau}C_0(V,\tau,r_n,\sigma_n)\\
\end{align*}
\begin{align*}
&=\sum^{\infty}_{n=1} [\frac{(\lambda' \tau)^n}{n!} e^{-\lambda' \tau}C_n(V,\tau,r_n,\sigma_n)(\frac{n}{\lambda}-\tau\\
& \quad -\beta\tau \frac{x N(d_{1n})}{C_n(V,\tau,r_n,\sigma_n)})] \\
& \quad -e^{-\lambda' \tau}\tau (C_0(V_t,D,\tau,\sigma_n,r_n)+\beta xN(d_{10}))\\
&=\sum^{\infty}_{n=0} [\frac{(\lambda' \tau)^n}{n!} e^{-\lambda' \tau}C_n(V,\tau,r_n,\sigma_n)(\frac{n}{\lambda}-\tau\\
& \quad -\beta\tau \frac{x N(d_{1n})}{C_n(V,\tau,r_n,\sigma_n)})]
\end{align*} 

where we used the fact that
\begin{gather*}
\frac{\partial d_1}{\partial r_n}= \frac{\partial d_2}{\partial r_n},\\
xN'(d_{1n})=Ke^{-r_n \tau}N'(d_{2n}),\\
C_n(V,\tau,r_n,\sigma_n)=xN(d_{1n})-Ke^{-r_n \tau}N(d_{2n}).
\end{gather*}
Since $n$ can be a very large number in the sum, then the term $\frac{n}{\lambda}-\tau-\beta\tau \frac{x N(d_{1n})}{C_n}$ is positive for most cases. That means the partial derivative w.r.t. $\lambda$ is positive. So the call price is increasing against $\lambda$.

We know $\frac{\partial C_J(V_t,D,\tau,\sigma^2,r,\delta^2,\lm,k)}{\partial \lm}>0.$ 
Therefore, $\frac{\partial B_t}{\partial \lm}=-\frac{\partial C_J(V_t,D,\tau,\sigma^2,r,\delta^2,\lm,k)}{\partial \lm}<0$ holds.

\end{proof}
\finproof

\textbf{3. Proof of Proposition III.8}

The bond price is decreasing in $k$ for the log-normal jumps process.

\begin{proof}
Since
\begin{align*}
&\dfrac{\partial C_J(V_t,D,\tau,\sigma^2,r,\delta^2,\lm,k)}{\partial k}\\
 	&=\dfrac{\partial}{\partial k}(\sum ^{\infty}_{n=0} \frac{(\lm'\tau)^n}{n!} e^{-\lm' \tau} C_n(V_t,D,\tau,\sigma_n,r_n))  \\
    &= \sum^{\infty}_{n=1} [n(\lm' \tau)^{n-1} \frac{\lm \tau}{n!}e^{-\lm'\tau}C_n(V_t,D,\tau,\sigma_n,r_n)]  \\
    & \quad  +\sum^{\infty}_{n=0} [\frac{(\lm' \tau)^n}{n!}(-\tau\lm)e^{-\lm' \tau}C_n(V_t,D,\tau,\sigma_n,r_n)]\\
	& \quad +\sum^{\infty}_{n=0} [\frac{(\lm' \tau)^n}{n!}e^{-\lm' \tau}\frac{\partial C_n(V_t,D,\tau,\sigma_n,r_n)}{\partial k}]\\
	&=\sum^{\infty}_{n=1} [ \frac{n}{1+k} \frac{(\lm' \tau ) ^{n}}{(n)!}e^{-\lm'\tau)}C_n(V_t,D,\tau,\sigma_n,r_n)] \\
	& \quad +\sum^{\infty}_{n=0} [\frac{(\lm' \tau)^n}{n!}(-\lm \tau)e^{-\lm' \tau}C_n(V_t,D,\tau,\sigma_n,r_n)]\\
\end{align*}
\begin{align*} 	
	& \quad +\sum^{\infty}_{n=0} [\frac{(\lm' \tau)^n}{n!}e^{-\lm' \tau}\frac{\partial C_n(V_t,D,\tau,\sigma_n,r_n)}{\partial k}]\\
    &=\sum^{\infty}_{n=1} [\frac{(\lm' \tau)^n}{n!} e^{-\lm' \tau}C_n(V_t,D,\tau,\sigma_n,r_n)(\frac{n}{1+k}-\lm \tau)]\\
    & \quad +(-\lm \tau)e^{-\lm' \tau}C_0(V_t,D,\tau,\sigma_n,r_n)\\
    & \quad +\sum^{\infty}_{n=0} [\frac{(\lm' \tau)^n}{n!}e^{-\lm' \tau}(xN'(d_{1n})\frac{\partial d_{1n}}{\partial r_n}(\frac{n}{(1+k)\tau}-\lm )\\
    & \quad -De^{-r_n \tau}(-\tau)(\frac{n}{(1+k)\tau}-\lm )N(d_{2n})\\
    & \quad -De^{-r_n\tau}N'(d_{2n})\frac{\partial d_{2n}}{\partial r_n}(\frac{n}{(1+k)\tau}-\lm))]\\
    & = \sum^{\infty}_{n=1} [\frac{(\lm' \tau)^n}{n!} e^{-\lm' \tau}C_n(V_t,D,\tau,\sigma_n,r_n)(\frac{n}{1+k}-\lm \tau)]\\
    & \quad +(-\lm \tau)e^{-\lm' \tau}C_0(V_t,D,\tau,\sigma_n,r_n)\\
    & \quad +\sum^{\infty}_{n=0} [\frac{(\lm' \tau)^n}{n!}e^{-\lm' \tau}(De^{-r_n \tau}(\frac{n}{1+k}-\lm \tau) N(d_{2n}))]\\
&=\sum^{\infty}_{n=1} [\frac{(\lm' \tau)^n}{n!} e^{-\lm' \tau}C_n(V_t,D,\tau,\sigma_n,r_n)(\frac{n-\lm'\tau}{1+k})\\
& \quad \frac{C_n(V_t,D,\tau,\sigma_n,r_n) +De^{-r_n \tau} N(d_{2n})}{C_n(V_t,D,\tau,\sigma_n,r_n)})]\\
& \quad -\lm \tau e^{-\lm' \tau}C_0(V_t,D,\tau,\sigma_n,r_n)-e^{-\lm' \tau}De^{-r_0 \tau}\lm \tau N(d_{20})\\
\end{align*}
\begin{align*}
&=\sum^{\infty}_{n=1} [\frac{(\lm' \tau)^n}{n!} e^{-\lm' \tau}(\frac{n-\lm'\tau}{1+k})x N(d_{1n}))]\\
& \quad -e^{-\lm' \tau}\lm \tau (C_0(V_t,D,\tau,\sigma_n,r_n)+De^{-r_0 \tau} N(d_{20}))\\
&=\sum^{\infty}_{n=1} [\frac{(\lm' \tau)^n}{n!} e^{-\lm' \tau}
	(\frac{n-\lm'\tau}{1+k}) x N(d_{1n})]-e^{-\lm' \tau}\lm \tau xN(d_{10})\\
&=\sum^{\infty}_{n=0} [\frac{(\lm' \tau)^n}{n!} e^{-\lm' \tau}
	(\frac{n-\lm'\tau}{1+k}) x N(d_{1n})].
\end{align*}

where we used the fact that
\begin{gather*}
\frac{\partial d_1}{\partial r_n}= \frac{\partial d_2}{\partial r_n},\\
xN'(d_{1n})=De^{-r_n \tau}N'(d_{2n}),\\
C_n(V_t,D,\tau,\sigma_n,r_n)=xN(d_{1n})-De^{-r_n \tau}N(d_{2n}).
\end{gather*}
Since $n$ goes to larger and large in the sum, then the term $\frac{n-\lm'\tau}{1+k}$ is positive for most cases. That means the partial derivative w.r.t. $k$ is positive. 
Then,$\frac{\partial B_t}{\partial k}=-\frac{\partial C_J}{\partial k}<0$.

\end{proof}
\finproof
\vfill

\bibliographystyle{spbasic}

\begin{thebibliography}{99}

\bibitem{BS} Black, Fischer and Myron Scholes (1973), \textit{The Pricing of Options and Corporate Liabilities} Journal of Political Economy, 81, 637-659.

\bibitem{Merton} Merton, R. C. (1976), \textit{Option pricing when underlying stock returns are discontinuous.}, J. Financial Economy. 3 125–144.

\bibitem{SS} Steven Shreve, (2010), \textit{Stochastic Calculus for Finance II: Continuous-Time Models}, Springer.

\bibitem{Lando} David Lando, (2004), \textit{Credit Risk Modeling: Theory and Applications}, Princeton University Press.
      
\bibitem{Zhou} Chunsheng Zhou, \textit{A Jump-Diffusion Approach to Modeling Credit Risk and Valuing Defaultable Securities}, Finance and Economics Discussion Series 1997-15 / Board of Governors of the Federal Reserve System, 1997.

\bibitem{JH} Hull, J. C. (2003). \textit{Options futures and other derivatives.} Pearson Education India.

\bibitem{Scott} Scott, Louis (1997), \textit{Pricing Stock Options in a Jump-Diffusion Model with Stochastic Volatility and Interest Rates: Application of Fourier Inversion Methods}  Mathematical Finance, 7, 413-426.


\bibitem{Li} Camara, Antonio and Li, Weiping, 
\textit{Jump-Diffusion Option Pricing without IID Jumps}  
Available at SSRN: https://ssrn.com/abstract=1282882 or http://dx.doi.org/10.2139/ssrn.1282882

\bibitem{Bates} Bates, David (1996), \textit{Jumps and Stochastic Volatility: Exchange Rate Processes Implicit in Deutschemark Options} Review of Financial Studies, 9, 69-108.

\bibitem{Hirsa} Hirsa, Ali, (2011), \textit{Computational Methods in Finance} Chapman\&Hull/CRC.

\bibitem{KW} Kienitz, J., Wetterau, D. (2012), \textit{Financial Modelling: Theory, Implementation and Practice with MATLAB Source} Wiley Finance, Chichester..

\bibitem{CK} N. Cai and S. Kou. \textit{Pricing Asian options under a hyper-exponential jump diffusion model.} Oper. Res., 60(1):64–77, 2012.

\bibitem{Kou} S. Kou.\textit{ A jump-diffusion model for option pricing.} Management Sci., 48(8):1086–1101, 2002.



%
%
\end{thebibliography}

\end{document}